\newtheorem{lemma}{Lemma}
\newtheorem{theorem}{Theorem}
\newtheorem{definition}{Definition}
\def\B{\mathcal{B}}
\def\S{\mathcal{S}}
\def\eps{\epsilon}
\def\lc{\left\lceil}   
\def\rc{\right\rceil}
\def\lf{\left\lfloor}   
\def\rf{\right\rfloor}
\def \O{\mathcal{O}}
\def \tO{\tilde{\mathcal{O}}}
\def \tDO{\widetilde{\mathcal{D}\mathcal{O}}}
\def\D{\mathcal{D}}
\def\H{\mathbb{H}}
\DeclareMathOperator{\arccosh}{arccosh}
\DeclareMathOperator*{\argmin}{arg\,min}
\DeclareMathOperator*{\argmax}{arg\,max}
\newcommand{\recenteringAlg}{\emph{Recentering-HyperbolicNN }}
\newcommand{\recenteringAlgNS}{\emph{Recentering-HyperbolicNN}}
\newcommand{\binarySearchAlg}{\emph{Binary-Search-HyperbolicNN }}
\newcommand{\binarySearchAlgNS}{\emph{Binary-Search-HyperbolicNN}}
\newcommand{\bandedAlg}{\emph{Spherical-Shell-HyperbolicNN }}
\newcommand{\bandedAlgNS}{\emph{Spherical-Shell-HyperbolicNN}}
\newcommand{\recenterBalls}{\emph{Euclidean-Center-of-Hyperbolic-Ball }}
\newcommand{\recenterBallsNS}{\emph{Euclidean-Center-of-Hyperbolic-Ball}}
\newcommand{\decisionProbeBand}{\emph{Check-Intersection }}
\newcommand{\decisionProbeBandNS}{\emph{Check-Intersection}}
\newcommand{\decisionProbeBandMaxRadius}{\emph{Choose-Band }}
\newcommand{\decisionProbeBandMaxRadiusNS}{\emph{Choose-Band}}
\newcommand{\bandingDatasetAlg}{\emph{Spherical-Shell-Partition }}
\newcommand{\bandingDatasetAlgNS}{\emph{Spherical-Shell-Partition}}
\newcommand{\bandedRandAlg}{\emph{Randomized-Spherical-Shell-HyperbolicNN }}
\newcommand{\bandedRandAlgNS}{\emph{Randomized-Spherical-Shell-HyperbolicNN}}
\title{Nearest Neighbor Search for Hyperbolic Embeddings}
\author{
	Xian Wu \\
	Stanford University \\ 
	\texttt{xwu20@stanford.edu}
	\and
	Moses Charikar \\
	Stanford University \\
	\texttt{moses@cs.stanford.edu}
	}
\date{}
\begin{document}

\maketitle

\begin{abstract}
Embedding into hyperbolic space is emerging as an effective representation technique for datasets that exhibit hierarchical structure. This development motivates the need for algorithms that are able to effectively extract knowledge and insights from datapoints embedded in negatively curved spaces. We focus on the problem of nearest neighbor search, a fundamental problem in data analysis. We present efficient algorithmic solutions that build upon established methods for nearest neighbor search in Euclidean space, allowing for easy adoption and integration with existing systems. We prove theoretical guarantees for our techniques and our experiments demonstrate the effectiveness of our approach on real datasets over competing algorithms. 
\end{abstract}

\section{Introduction}
We study the nearest neighbor problem for vector representations in hyperbolic space: given a dataset $\D$ of vectors and a query $q$, find the nearest neighbor of $q$ among the elements of $\D$ according to the hyperbolic distance metric. Nearest neighbor search is an important building block in many applications, including classification, recommendation systems, DNA sequencing, web search, and near duplicate detection. Yet for embeddings into negatively curved spaces, we still lack simple, practical, experimentally verified and theoretically justified solutions to tackle this question.

Hyperbolic embeddings have emerged as a useful way of representing data that exhibit hierarchical structure. \cite{nickel2017poincare} studies the representation and generalization performance of hyperbolic embeddings in comparison to Euclidean and translational embeddings and shows that hyperbolic embeddings outperforms with just a few dimensions. Later work focuses on techniques to produce even higher quality hyperbolic embeddings, including different training algorithms in different models of hyperbolic space \cite{nickel2018learning} and combinatorial embedding algorithms \cite{de2018representation}, and hybrid training models \cite{le2019inferring}. These developments motivate the need for algorithms that are able to effectively extract knowledge and insights from hyperbolic data representations, for example neural networks that can work with hyperbolic embeddings as feature vectors \cite{ganea2018hyperbolic}. We focus on the problem of nearest neighbor search. 

Despite the extensive literature on nearest neighbor search, most focus on the Euclidean setting and very few existing algorithms can be applied to hyperbolic embeddings. One relevant work for hyperbolic space is \cite{krauthgamer2006algorithms}, which proposes an approximate nearest neighbor search scheme that involves iteratively partitioning the space using special separator points.
They prove the existence of such points, but do not give an efficient algorithm to find them. Moreover, their solution requires precise a-priori knowledge of intrinsic parameters, such as the hyperbolicity of the dataset, that are computationally very difficult to compute exactly, \cite{borassi2015computing}. Their approximation guarantees are in terms of these parameters, so using upper bounds could lead to poor performance. 
There are also nearest neighbor graph methods \cite{malkov2018efficient} \cite{naidan2015permutation} \cite{fu2019fast} \cite{subramanya2019rand} that create a search graph for a dataset by linking elements are close together in a generic distance metric, and hyperbolic distance applies. The drawback is that they do not come with any theoretical guarantees and require a lot of hyperparameter tuning and high indexing costs. 

Our focus is on developing efficient nearest neighbor algorithms for hyperbolic space with provable guarantees that also work well in practice. We leverage solutions for provably efficient nearest neighbor search in Euclidean space and show how those algorithms can be used in a black box fashion to find nearest neighbors in hyperbolic space with minimal additional cost in query time and storage. Our solution is simple, intuitive, and easy to adopt by practitioners. We experiment on real datasets and show that our technique compares favorably against benchmark methods. Our theoretical analysis develops a rigorous understanding of our techniques and our ideas offer insights on key properties of negatively curved spaces that we hope will benefit future algorithmic work on hyperbolic space. 

\section{Related Work}
Our work adds to a fast-growing collection of exciting progress on hyperbolic representation learning, recently popularized by the work of \cite{nickel2017poincare} and \cite{nickel2018learning}. \cite{ganea2018hyperbolic} \cite{gu2018learning} \cite{law2019lorentzian} \cite{de2018representation} \cite{tifrea2018poincar} study techniques for learning more effective hyperbolic embeddings from hierarchical data, including both neural network and combinatorial based approaches. Works such as \cite{cho2019large} \cite{davidson2018hyperspherical} \cite{tran2020hyperml} develop techniques for performing downstream tasks such as classification and recommendation given pretrained embeddings. \cite{dhingra2018embedding} and \cite{tay2018hyperbolic} work in the NLP domain and train hyperbolic word embeddings and use them for downstream tasks such a Question Answering. \cite{chamberlain2017neural} embeds graphs into hyperbolic space. \cite{gulcehre2018hyperbolic} \cite{ganea2018hyperbolic} develop neural network architectures for transformers and recurrent neural networks that use hyperbolic geometry to learn from datasets with hierachical structure. 

Nearest neighbor methods in Euclidean space are well studied, see \cite{reza2014survey} for a general survey. There are many different techniques that come with provable guarantees, including Locality Sensitive Hashing \cite{wang2014hashing}, KD trees \cite{bentley1975multidimensional}, and many others, see \cite{reza2014survey} and references therein . On the empirical side, https://github.com/erikbern/ann-benchmarks compares performance of common nearest neighbor algorithms for benchmark datasets. However, these techniques and analyses are focused on Euclidean space, and do not apply immediately to hyperbolic space. To our knowledge, we are the first to present a theoretically justified and empirically validated solution for nearest neighbors in hyperbolic space.  

\section{Problem formulation and approach overview}
We are given a dataset $\D$ of $n$ points and a query $q$ in hyperbolic space and want to find the nearest neighbor or approximate nearest neighbor to $q$ from $p \in \D$. We call a point $p$ a $c$-approximate nearest neighbor for $c > 1$ if $d_H(p, q) \leq c \cdot d_H(p^*, q)$, where $p^*$ is the nearest neighbor to $q$ in the hyperbolic metric, and $d_H$ is the hyperbolic distance function. There are several models of hyperbolic space and we focus on the popular and intuitive Poincar\'e ball model in $r$ dimensions, which we denote $\H_r$. The different models are isometric, so one can apply our techniques to points embedded into other models by translating them to the Poincar\'e ball, see \cite{cannon1997hyperbolic} for details. 

\subsection{Preliminaries}

In $\H_r$, all points are inside the $r$-dimensional unit ball, and distance between points $x$ and $y$ is defined by 
\begin{equation}
\label{eq:poincare_distance}
d_H(x, y) = \arccosh \left(1 + \frac{2\|x-y\|^2}{(1-\|x\|^2) (1-\|y\|^2)}\right) ~,
\end{equation}
where $\|\cdot\|$ denotes Euclidean norm or Euclidean distance. 

We denote $\B_H(q, d)$ the hyperbolic ball around center $q$ with hyperbolic radius $d$. We denote $\B_E(q, d)$ as the Euclidean ball around center $q$ with Euclidean radius $d$. One useful fact is that for every $q, d$, $\B_H(q, d) = \B_E(q', d')$ for some $q', d'$ that can be solved via simple calculations (ie, hyperbolic balls in Poincar\'e space are Euclidean balls with different centers and radii) \cite{cannon1997hyperbolic}. 

\subsection{Overall approach}
Our overall approach is to leverage existing Euclidean nearest neighbor methods to find near exact hyperbolic nearest neighbors. {\emph{Our first class of algorithms use the key fact that hyperbolic balls in $\H_r$ are Euclidean balls with different centers of gravity.}} For query $q$, if we had $p \in \D$ such that $p \in \B_H(q, d_H(p, q)) = \B_E(q', d')$, then we can find a better neighbor by doing Euclidean nearest neighbor search on $q'$. 

{\emph{Our second main class of algorithms uses the insight that when $p \in \D$ have similar Euclidean norms, the denominator term $(1-\|p\|^2) (1-\|q\|^2)$ in Eq. \ref{eq:poincare_distance} is similar for different $p$, so the problem reduces to minimizing $2\|p-q\|^2$, which is a Euclidean nearest neighbor problem.}} We first partition our dataset so that elements in one partition have similar Euclidean norms, perform Euclidean nearest neighbor search in these partitions separately, and then aggregate results. For massive datasets, this idea also provides a way to shard the database that maintains efficient search and indexing. 

We abstract our use of Euclidean nearest neighbor algorithms into black box oracles; our algorithms are compatible with any implementation of Euclidean nearest neighbor search, however performance varies depending on the underlying algorithm. We use the following classes of oracles:

\begin{definition}[Exact Euclidean Nearest Neighbor Oracle $\O$]
The exact Euclidean nearest neighbor oracle, $\O$ takes as input a query $q$ and a dataset $\D$ and returns $O(q, \D)$, which is an element $d$ in $\D$ that minimizes Euclidean distance to $q$ in query time $\mathcal{T}$ and space $\S$. 
\end{definition}
\begin{definition}[$(1+\eps)$-approximate Euclidean Nearest Neighbor Oracle $\tO$]
For $\eps > 0$, a $(1+\eps)$-approximate Euclidean nearest neighbor oracle, $\tO$ takes as input a query $q$ and a dataset $\D$, and returns $\tO(q, \D)$, which is some $d \in \D$ such that $\|d - q\| \leq (1+\eps) \|q-n_E\|$ in query time $\mathcal{T}$ and space $\S$, where $n_E$ is the Euclidean nearest neighbor to $q$ in $\D$. 
\end{definition}

We do not include failure probability into our definition of $\tO$ even though many of them give high probability guarantees, because this can be resolved using independent trials. Examples of common oracles and their performance are in \cite{wang2014hashing} and references therein.

To summarize, our main contributions are: 
\begin{itemize}
\item \recenteringAlgNS, an exact hyperbolic nearest neighbor algorithm that uses an exact Euclidean nearest neighbor oracle. 
\item \binarySearchAlgNS, a $c$-approximate hyperbolic nearest neighbor algorithm that uses an exact Euclidean nearest neighbor oracle. 
\item \bandedAlgNS, a $c$-approximate hyperbolic nearest neighbor algorithm that uses a $(1+\eps)$-approximate Euclidean nearest neighbor oracle. 
\end{itemize}

\section{Recentering algorithms using exact Euclidean oracles}

In each iteration of \recenteringAlgNS, Algorithm \ref{alg:nn_recentering_alg}, we take the current best hyperbolic nearest neighbor $n_H$ (initially set to be the Euclidean nearest neighbor of $q$) and attempt to find a closer point in hyperbolic distance. We exploit the fact that the hyperbolic ball around $q$ that has $n_H$ on its boundary is a Euclidean ball around a different point $q_{new}$ \cite{cannon1997hyperbolic}. Performing Euclidean nearest neighbor search around $q_{new}$ either finds a point strictly inside this ball (which is closer to q than $n_H$ in hyperbolic distance), or establishes that $n_H$ indeed is the hyperbolic nearest neighbor of $q$. 

\recenteringAlg uses \recenterBallsNS, an elementary subroutine that performs the recentering. Details can be found in \cite{cannon1997hyperbolic} and in the appendix.  

\begin{algorithm}[ht]
\caption{\recenteringAlg}
\label{alg:nn_recentering_alg}
\begin{algorithmic}[1]
\REQUIRE{query $q$, dataset $\D$, exact Euclidean nearest neighbor oracle $\O$}
\STATE $n_H \leftarrow q$
\STATE $n_E \leftarrow \O(q, \D)$. 
\WHILE{$d_H(n_E, q) \neq d_H(n_H, q)$}
\STATE $n_H \leftarrow n_E$
\STATE $q_{new} = \recenterBalls(q, d_H(q,n_H))$
\STATE $n_E \leftarrow \O(q_{new}, \D)$
\ENDWHILE
\STATE \textbf{Return} $n_H$
\end{algorithmic}
\end{algorithm}

\begin{theorem}
\label{thm:nn_recentering_alg}
Suppose that the Euclidean nearest neighbor to $q$, is the $k$-th nearest hyperbolic neighbor to $q$. Then Algorithm \ref{alg:nn_recentering_alg} returns the hyperbolic nearest neighbor $n_H$ after at most $k+1$ invocations of the exact Euclidean nearest neighbor oracle $\O$. The runtime of this algorithm is at most $(k+1) \mathcal{T}$, where $\mathcal{T}$ is the runtime for one invocation of $\O$. The storage of this algorithm is $\S$, where $\S$ is the storage requirement of $\O$. 
\end{theorem}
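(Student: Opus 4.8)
The plan is to maintain two facts along the execution of Algorithm~\ref{alg:nn_recentering_alg}: that every assignment $n_H \leftarrow n_E$ inside the loop strictly decreases $d_H(n_H,q)$, and hence the hyperbolic rank of $n_H$ among the points of $\D$; and that when the loop exits, $n_H$ is the hyperbolic nearest neighbor of $q$. Both follow from the one geometric fact recalled in the preliminaries: for the current $n_H$, the subroutine \recenterBalls produces a point $q_{new}$ such that $\B_H(q, d_H(q,n_H)) = \B_E(q_{new}, d')$ for the associated Euclidean radius $d'$; moreover $n_H$ sits at hyperbolic distance exactly $d_H(q,n_H)$ from $q$, so it lies on the boundary of this ball and $\|n_H - q_{new}\| = d'$.

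First I would establish monotonicity. When the loop body runs it sets $n_H \leftarrow n_E$ and then calls $n_E \leftarrow \O(q_{new}, \D)$ with $q_{new}$ the Euclidean center of $\B_H(q, d_H(q,n_H))$. Since $n_H \in \D$ and $\|n_H - q_{new}\| = d'$, the exact oracle returns some $n_E$ with $\|n_E - q_{new}\| \le d'$, hence $n_E \in \B_E(q_{new}, d') = \B_H(q, d_H(q,n_H))$ and therefore $d_H(n_E, q) \le d_H(n_H, q)$ always. Consequently the loop guard $d_H(n_E,q) \neq d_H(n_H,q)$ can hold only when $d_H(n_E,q) < d_H(n_H,q)$ strictly, so each subsequent execution of the body moves $n_H$ to a point strictly closer to $q$, strictly lowering its hyperbolic rank.

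Next, correctness at exit. When the loop terminates we have $d_H(n_E,q) = d_H(n_H,q)$ with $q_{new}$ the Euclidean center matching the current $n_H$. If some $p \in \D$ had $d_H(p,q) < d_H(n_H,q)$, then $p$ would lie strictly inside $\B_H(q, d_H(q,n_H)) = \B_E(q_{new}, d')$, so $\|p - q_{new}\| < d'$, and the exact oracle would have returned an $n_E$ with $\|n_E - q_{new}\| < d'$ and thus $d_H(n_E,q) < d_H(n_H,q)$, contradicting the exit condition; hence $n_H$ is a hyperbolic nearest neighbor of $q$. (The degenerate case $q \in \D$ is caught at line~2, where $n_E$ satisfies $d_H(n_E,q)=0=d_H(n_H,q)$ and the loop is skipped.)

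Finally I would count oracle calls. Line~2 is one call; the first iteration of the loop sets $n_H$ to the Euclidean nearest neighbor of $q$, which by hypothesis has hyperbolic rank $k$; by the monotonicity above each further iteration strictly lowers this rank, so after at most $k$ executions of the body $n_H$ has rank $1$, at which point the next guard check yields equality (nothing is strictly closer) and the loop halts. This is at most $k+1$ invocations of $\O$, hence runtime at most $(k+1)\mathcal{T}$, the remaining per-iteration work (one \recenterBalls call and a distance comparison) being elementary; the only persistent state is a constant number of points plus the oracle's own data structure, so the space is $\S$. The step needing the most care is pinning down the loop invariant at the exit --- in particular that $\O$ is always queried at the center $q_{new}$ of the ball whose boundary passes through the \emph{current} $n_H$ --- together with the implicit claim that strictly smaller hyperbolic distance to $q$ means strictly smaller rank, which is immediate when the hyperbolic distances from $q$ to points of $\D$ are distinct and otherwise just needs a fixed tie-breaking convention in the definition of rank.
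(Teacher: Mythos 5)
Your proof is correct and takes essentially the same route as the paper's: query $\O$ at the Euclidean center of the hyperbolic ball through the current $n_H$, observe that each loop iteration either strictly improves the hyperbolic distance (hence the rank) or certifies that no strictly closer point exists, and count at most $k+1$ oracle calls. You simply make the loop invariant, the exit-time contradiction argument, and the tie-breaking/degenerate cases more explicit than the paper does.
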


\begin{proof}
If there is an exact match, we would invoke $\O$ once. If there is no exact match, the first invocation of $\O$ returns the Euclidean nearest neighbor to the query, $n_E$. We can draw a hyperbolic ball around $q$ with radius $d_H(q, n_E)$. Clearly, any point that is closer in hyperbolic distance to $q$ must lie inside this ball. So we will find these points by calling $\O$ on the Euclidean center of this ball, $q_{new}$, which guarantees an improvement over $n_E$. We recurse on this logic. If at round $r$, we do not get an improvement, then we terminate, as there cannot be a point that is a nearer neighbor. 

Since each round results in a strict improvement or a termination, if the Euclidean nearest neighbor of $q$ is the $k$-th nearest hyperbolic neighbor to $q$, then \recenteringAlg terminates in at most $k+1$ invocations of $\O$. The runtime guarantee follows trivially. 
\end{proof}

\recenteringAlg generalizes to provably return $K$ nearest neighbors using an oracle that finds $K$ Euclidean nearest neighbors when the recentering and termination criterion use the $K$-th nearest neighbor found so far. 

Theorem \ref{thm:nn_recentering_alg} provides a worst case guarantee in terms $k$, the ranking of the Euclidean nearest neighbor with respect to the hyperbolic metric. Our algorithm doesn't need to know $k$; moreover, in the best case, the datapoints could be such that the Euclidean nearest neighbor of $q$ is the $k$-th hyperbolic nearest neighbor to $q$ for arbitrarily high $k$ but \recenteringAlg returns the hyperbolic nearest neighbor in 3 invocations to $\O$. 

However, in the worst case, \recenteringAlg returns the hyperbolic nearest neighbor in exactly $k+1$ invocations of $\O$ for arbitrary $k$. We give the construction below. 

\begin{lemma}
Let $q$ be our query in 1 dimension, and $\|q\|$ is close to 1. Suppose for arbitrary $k \in \mathbb{N}$, we have data points $q+z, q-z, p_1, \ldots p_{k-2}$, where $p_i = \frac{2^i -1}{2^i}$, and $z$ is very small and satisfies $q - z \geq \frac{2^{k} -1}{2^{k}}$ and $q + z < 1$, and $d_H(q, 0) = d_H(q, q+z)$. Then \recenteringAlg returns $q-z$, hyperbolic nearest neighbor in exactly $k+1$ invocations to $\O$.
\end{lemma}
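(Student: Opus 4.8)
The plan is to run \recenteringAlg symbolically on this instance and track the working neighbor $n_H$, showing it takes the values $q+z,\,p_1,\,p_2,\ldots,p_{k-2},\,q-z$ in that order and that one further oracle call then certifies $q-z$, for a total of $k+1$ calls to $\O$. Before the trace I would record three facts. First, in one dimension $\B_H(q,d)$ is the interval whose endpoints are the two points at hyperbolic distance $d$ from $q$ on either side of $q$ (the lower endpoint may be $\le 0$), and \recenterBalls returns its Euclidean midpoint. Second, solving $d_H(q,0)=d_H(q,q+z)$ in Eq.~\ref{eq:poincare_distance} yields $q+z=\tfrac{2q}{1+q^2}$, hence $z=\tfrac{q(1-q^2)}{1+q^2}$ and $q-z=\tfrac{2q^3}{1+q^2}$; together with $q<1$, the hypothesis $q-z\ge 1-2^{-k}$ then forces $q>1-2^{-k}$, $z<2^{-k}$, $q-z>p_{k-2}$, and $(q-z)-p_{k-2}\ge 3\cdot 2^{-k}$. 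Third, setting $p_0:=0$ one has $\tfrac{p_{j-1}+1}{2}=p_j$ for all $j$, and $d_H(q,\cdot)$ is strictly monotone on each side of $q$ along the diameter, so the data in increasing hyperbolic distance from $q$ is $q-z,\,p_{k-2},\ldots,p_1,\,q+z$ --- the Euclidean nearest neighbor $q+z$ is thus the $k$-th hyperbolic neighbor, so Theorem~\ref{thm:nn_recentering_alg} already bounds the count by $k+1$ and only the matching lower bound is at issue.

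For the first oracle call: $q+z$ and $q-z$ are both at Euclidean distance $z$ from $q$ while each $p_i$ is at distance $\ge q-p_{k-2}>3\cdot2^{-k}>z$, so $\O(q,\D)$ returns one of $q\pm z$; since Theorem~\ref{thm:nn_recentering_alg} is a worst-case statement I take the tie broken so that $n_E=q+z$ (equivalently, replace the datapoint $q-z$ by $q-z-\delta$ for small $\delta$). I would then prove by induction on $j=1,\ldots,k-2$ that at the top of iteration $j$ the working neighbor lies at hyperbolic distance $d_H(q,p_{j-1})$ from $q$ (it is $q+z$ when $j=1$, using $d_H(q,q+z)=d_H(q,0)=d_H(q,p_0)$, and $p_{j-1}$ otherwise). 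Recentering then produces $q_{new}=\tfrac{p_{j-1}+c_j}{2}$, where $c_j\in(q,1)$ is the endpoint of $\B_H(q,d_H(q,p_{j-1}))$ that exceeds $q$; since $c_j<1$ we get $q_{new}<p_j$, and since $c_j>q>p_{k-2}\ge p_j$ we get $q_{new}>\tfrac{p_{j-1}+p_j}{2}$. Hence $q_{new}$ sits strictly between the midpoint of $p_{j-1}$ and $p_j$ and the point $p_j$ itself, so its distance to $p_j$ is below $2^{-(j+1)}$ while every other datapoint is more than $2^{-(j+1)}$ away (using $q-z\ge 1-2^{-k}$ and $j\le k-2$; on the right, when $j=k-2$, one has $(q-z)-p_{k-2}\ge 3\cdot2^{-k}$); thus $\O(q_{new},\D)=p_j$, which is a strict hyperbolic improvement because $p_{j-1}<p_j<q$, and the loop re-enters with $n_H=p_j$.

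The endgame is iterations $k-1$ and $k$. At iteration $k-1$, $n_H=p_{k-2}$ and $q_{new}=\tfrac{p_{k-2}+c_{k-1}}{2}<\tfrac{p_{k-2}+1}{2}=p_{k-1}=1-2^{-(k-1)}\le 1-2^{-k}\le q-z$, so $q_{new}\in(p_{k-2},q-z)$; because $c_{k-1}>q>q-z$ we have $q_{new}>\tfrac{p_{k-2}+(q-z)}{2}$, so $q_{new}$ is strictly closer to $q-z$ than to $p_{k-2}$ (hence than to every $p_i$), while $q+z$ lies $2z$ farther than $q-z$ --- so $\O(q_{new},\D)=q-z$, a strict improvement. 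At iteration $k$, $n_H=q-z$, and since $d_H(q,q-z)<d_H(q,0)=d_H(q,q+z)$ the endpoint $c_k$ of $\B_H(q,d_H(q,q-z))$ above $q$ satisfies $c_k<q+z$, so $q_{new}=\tfrac{(q-z)+c_k}{2}\in(q-\tfrac{z}{2},q)$; then $q-z$ beats $q+z$ (the distance gap is $q+z-c_k>0$) and beats every $p_i$ (whose distance from $q_{new}$ exceeds $(q-z)-p_{k-2}\ge 3\cdot2^{-k}>z$, whereas the distance to $q-z$ is below $z$), so $\O(q_{new},\D)=q-z=n_H$, the loop condition $d_H(n_E,q)\ne d_H(n_H,q)$ fails, and the algorithm returns $q-z$. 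The call count is one on line~2 plus one per iteration for $k$ iterations, i.e.\ $k+1$. I expect the crux to be this endgame: at iteration $k-1$ the recentered query approaches the ``phantom'' point $p_{k-1}\notin\D$, which is almost equidistant from $p_{k-2}$, $q-z$, and $q+z$, and the construction is rigged so that the near-tie breaks toward $q-z$ --- the hypothesis $q-z\ge 1-2^{-k}$ keeps $q_{new}$ strictly below $q-z$ at iteration $k-1$, and $d_H(q,0)=d_H(q,q+z)$ forces $c_k<q+z$ at iteration $k$ so that $q-z$ wins again --- and I would verify each of these inequalities directly rather than through a limiting argument.
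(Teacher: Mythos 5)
Your proposal is correct and follows essentially the same route as the paper's proof: you trace \recenteringAlg on the construction, showing the first call yields $q+z$, the successive recentered queries land just below $p_1, p_2, \ldots, p_{k-2}$ so the oracle climbs through the $p_j$'s, and the final two calls find and then certify $q-z$, for $k+1$ invocations in total. Your treatment is simply a more rigorous version of the paper's sketch, with the added (and fair) observation that the first call has a Euclidean tie between $q+z$ and $q-z$ that must be broken adversarially or removed by a small perturbation.
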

\begin{proof}
\recenteringAlg first returns $n_E = q + z$, the $k$-th hyperbolic nearest neighbor to $q$. $n_E$ is close to the edge of the disk whereas all the other points are closer to the origin, so $d_H(q, n_E)$ is high even though the Euclidean distance is small. The new center from the first recentering is near the point $\frac{1}{2}$, so the next call to $\O$ returns $p_1 = \frac{1}{2}$. Subsequent calls to $\O$ will return $\frac{3}{4} = p_2$, and then $p_3, \ldots p_{k-2}$ until we finally find $q-z$. 
\end{proof}

\subsection{$k$-Independent approximate hyperbolic nearest neighbor algorithm}

\binarySearchAlg is an approximate hyperbolic nearest neighbor algorithm that aims to approximate the smallest possible radius $r$ around the query such that $B_H(q, r)$ is non-empty, which essentially isolates the nearest neighbor. It performs binary search on $r$, starting from the upper bound $r = d_H(q, n_E)$, and continues until it finds a small enough non-empty radius that satisfies the desired approximation guarantee.  

Using the same recentering idea, we can use $\O$ to determine whether $B_H(q, r)$ is non-empty for any $r$. The nearest neighbor that $\O$ outputs is the certificate that indicates whether to recurse on the left or right side of the binary search. The algorithm maintains upper and lower bounds $R_i$ and $L_i$ on $r$ in each round $i$, ensuring that $\frac{R_{i+1}}{L_{i+1}}\leq\sqrt{R_i/L_i}$. 

\begin{algorithm}[ht]
\caption{\binarySearchAlg}
\label{alg:binary_search}
\begin{algorithmic}[1]
\REQUIRE{query $q$, exact Euclidean nearest neighbor oracle $\O$, approximation guarantee $c > 1$}
\STATE $n_E \leftarrow \O(q, \D)$
\STATE $n_H \leftarrow n_E$
\IF {$n_E = q$}
\STATE \textbf{Return} $n_H$
\ENDIF
\STATE $L = d_H\left(q, \left(1-\frac{\|d_E -q\|}{\|q\|}\right)\cdot q\right)$
\STATE $R = d_H(n_E, q)$
\WHILE{$R > cL$}
\STATE $q_{new} = \recenterBalls(q, \sqrt{RL})$
\STATE $n_E \leftarrow \O(q_{new}, \D)$
\IF {$d_H(n_E, q) > \sqrt{RL}$}
\STATE $L \leftarrow \sqrt{RL}$
\ELSE
\STATE $n_H \leftarrow n_E$
\STATE $R \leftarrow d_H(n_H, q)$
\ENDIF
\ENDWHILE
\STATE \textbf{Return} $n_H$
\end{algorithmic}
\end{algorithm}

\begin{theorem}
\label{thm:binary_search}
Given query $q$, and approximation constant $c > 1$, and letting $R_{initial} = d_H(q, n_E), L_{initial}$ be initial non-zero upper and lower bounds on the distance of the hyperbolic nearest neighbor to $q$, \binarySearchAlg returns a $c$-approximate hyperbolic nearest neighbor in at most $\log_2 \left( \frac{\log \left( \frac{R_{initial}}{L_{initial}}\right)}{\log(c)}\right)$ rounds. The total runtime is $\mathcal{T} \cdot \log_2 \left( \frac{\log \left( \frac{R_{initial}}{L_{initial}}\right)}{\log(c)}\right)$, where $\mathcal{T}$ is the runtime for one invocation of $\O$. The storage of this algorithm is $\S$, where $\S$ is the storage requirement of $\O$. 
\end{theorem}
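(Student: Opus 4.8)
The plan is to track the ratio $\Gamma_i := R_i/L_i$ of the current upper bound $R_i$ to the current lower bound $L_i$ on $d_H(q,p^*)$, where $p^*$ is the true hyperbolic nearest neighbor, and to establish three things: (i) $R_i$ and $L_i$ stay valid upper/lower bounds throughout the loop, with the additional invariant $R_i = d_H(q,n_H)$; (ii) each execution of the while body replaces $\Gamma_i$ by a quantity that is at most $\sqrt{\Gamma_i}$ (this is exactly the invariant $R_{i+1}/L_{i+1}\le\sqrt{R_i/L_i}$ flagged before the algorithm); and hence (iii) after $i$ rounds $\Gamma_i\le\Gamma_0^{1/2^i}$, so $\Gamma_i$ drops to $c$ or below — making the guard $R>cL$ fail — once $\Gamma_0^{1/2^i}\le c$, i.e. once $i\ge\log_2\!\big(\log(R_{initial}/L_{initial})/\log c\big)$. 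Combining (i) with the exit condition $R\le cL$ then gives $d_H(q,n_H)=R\le cL\le c\,d_H(q,p^*)$, so the returned point is a $c$-approximate neighbor, and (iii) bounds the number of rounds.

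For the setup: if the first call returns $q$ itself then $q\in\D$ is the exact neighbor and we are done; otherwise $d_H(q,p^*)>0$, and $R=d_H(q,n_E)$ is a genuine upper bound because $n_E\in\D$. The value $L$ in line 6 is a genuine lower bound: writing $\rho=\|n_E-q\|$, the point $(1-\rho/\|q\|)q$ is the point of smallest norm on the Euclidean sphere $\partial B_E(q,\rho)$, and since $d_H(q,\cdot)$ restricted to a Euclidean sphere centered at $q$ is monotone increasing in the norm of the point (read off from Eq.~\ref{eq:poincare_distance}), it minimizes $d_H(q,\cdot)$ over that sphere; one then checks that $B_H(q,L)\subseteq B_E(q,\rho)$, which contains no data point since $n_E$ is the Euclidean nearest neighbor, so $d_H(q,p^*)\ge L$. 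The core of the round analysis is the non-emptiness test inside the loop. In a round with $L\le R$, set $m=\sqrt{RL}$ (so $L\le m\le R$) and let $q_{new}$ be the Euclidean center of the Euclidean ball $B_H(q,m)=B_E(q_{new},m')$, computed by the \recenterBalls subroutine. If $\O(q_{new},\D)$ returns $n_E$ with $d_H(n_E,q)>m$, then $n_E\notin B_E(q_{new},m')$; since $n_E$ minimizes Euclidean distance to $q_{new}$, no data point lies in $B_E(q_{new},m')=B_H(q,m)$, hence $d_H(q,p^*)\ge m$ and setting $L\leftarrow m$ keeps $L$ valid while $R$ and $n_H$ are unchanged, so $\Gamma'=R/m=\sqrt{R/L}$. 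Otherwise $d_H(n_E,q)\le m\le R$, so $n_E$ is a data point at least as good as $n_H$; setting $n_H\leftarrow n_E$, $R\leftarrow d_H(n_H,q)$ keeps $R$ valid, preserves $R=d_H(q,n_H)$, and yields $\Gamma'=R'/L\le m/L=\sqrt{R/L}$. In both cases $\Gamma'\le\sqrt{\Gamma}$, and $\Gamma'\ge1$ because $R'$ and $L'$ sandwich the positive number $d_H(q,p^*)$.

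Iterating the bound $\Gamma_{i+1}\le\sqrt{\Gamma_i}$ from $\Gamma_0=R_{initial}/L_{initial}$ gives $\Gamma_i\le\Gamma_0^{1/2^i}$; solving $\Gamma_0^{1/2^i}\le c$ (take logs) gives the stated round count $\log_2\!\big(\log(R_{initial}/L_{initial})/\log c\big)$, after which the guard fails and $n_H$ is returned with $d_H(q,n_H)\le c\,d_H(q,p^*)$ as above. For the resource bounds: each round makes exactly one invocation of $\O$, at $q_{new}$, plus $O(1)$ arithmetic for \recenterBalls and the distance comparisons, and there is a single extra invocation before the loop, giving total time $\mathcal{T}\cdot\log_2\!\big(\log(R_{initial}/L_{initial})/\log c\big)$ up to the lower-order initial call, and storage $\S$ since only a constant number of points are retained beyond the oracle's working set.

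I expect the main obstacle to be the geometric facts underlying the recentering trick rather than the recurrence: one needs that a hyperbolic ball about $q$ is a Euclidean ball (cited), used both to justify the line-6 lower bound and to turn ``$B_H(q,m)$ empty'' into an oracle query, and for the former step one must verify $B_H(q,m)\subseteq B_E(q,\cdot)$, which hinges on the center of $B_H(q,m)$ lying on the segment between the origin and $q$ and on the monotonicity of $d_H(q,\cdot)$ in the norm along Euclidean spheres centered at $q$; once these are in place the ``empty ball $\iff$ oracle answer lies outside'' equivalence and the ratio recurrence are routine. A minor point is that the stated bound omits a ceiling on the number of rounds, which I would either insert or absorb into an additive $O(1)$.
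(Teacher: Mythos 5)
Your proposal is correct and follows essentially the same argument as the paper: maintain valid lower/upper bounds $L,R$ on $d_H(q,n^*)$, use recentering plus the exact Euclidean oracle to test emptiness of $B_H(q,\sqrt{RL})$, drive $R/L$ to its square root each round, and solve $(R_{initial}/L_{initial})^{1/2^\delta}\le c$ for the round count. Your extra care about the lower-bound initialization (monotonicity of $d_H(q,\cdot)$ on Euclidean spheres about $q$ and the containment $B_H(q,L)\subseteq B_E(q,\rho)$) and the remark about the missing ceiling only tighten steps the paper states more informally.
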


\begin{proof}
$n^*$, the hyperbolic nearest neighbor to $q$, is always within hyperbolic distance $L$ and $R$ to $q$ in every iteration. This is true at the beginning of the algorithm: $d_H(q, n^*) \leq d_H(q, n_E) = R$. $\O$ produces a Euclidean nearest neighbor $n_E$, which means that the interior of the Euclidean ball around $q$ with radius $n_E$ is empty. Therefore $d_H(q, n^*)$ must be at least as far away from $q$ as the closest point on this ball to $q$ in hyperbolic distance. The closest. point can be expressed as $t\cdot q$ for $0< t < 1$, and satisfies $\|q - t\cdot q\| = \|d_E, q\|$. Therefore, this point is $\left(1-\frac{\|d_E -q\|}{\|q\|}\right)\cdot q$, and so $L$ as initialized in the algorithm is a valid lower bound. In the first iteration of the algorithm, we search within the hyperbolic ball around $q$ with hyperbolic radius $\sqrt{RL}$ by finding the Euclidean center to this ball and searching for the Euclidean nearest neighbor. If we find $n_E$ such that $d_H(q, n_E) \leq \sqrt{RL}$, this means that this ball is nonempty and so $n^*$ must be within hyperbolic distance $L$ and $\sqrt{RL}$. Furthermore, we have a point $n_E$ such that $L \leq d_H(q, n_E) \leq \sqrt{RL}$. Otherwise if this ball is empty then the nearest neighbor must have hyperbolic distance at least $\sqrt{RL}$ and so we update the lower threshold, $L$. Therefore, at any point in the algorithm, $L$ and $R$ represent valid upper and lower bounds for $d_H(q, n^*)$. Note also that the current $n_H$ is always a point such that $L \leq d_H(q, n_H) \leq R$. At each iteration, the square root the ratio $\frac{R}{L}$ from the previous round until we hit the termination condition that $\frac{R}{L} \leq c$, so that $n_H$ is a $c$-approximate nearest neighbor. 

Let $R_i$ and $L_i$ be the upper and lower thresholds at round $i$. Then in the next round, $\frac{R_{i+1}}{L_{i+1}} \leq \sqrt{\frac{R_i}{L_i}}$. Suppose the algorithm starts off with $R_{initial}$ and $L_{initial}$. Then \binarySearchAlg terminates in $\delta$ rounds, where $\left( \frac{R_{initial}}{L_{initial}} \right)^{\frac{1}{2^\delta}}\leq c$. Solving for $\delta$ yields $\delta \geq \log_2 \left( \frac{\log \left( \frac{R_{initial}}{L_{initial}}\right)}{\log(c)}\right)$. 
\end{proof}
In the worst case, we establish in Lemma \ref{lem:R_L_high} that $\frac{R_{initial}}{L_{initial}}$ can be arbitrarily high. The construction is simple -- we choose (Euclidean) co-linear $n_E, n_H, q$ where $\|n_E - q\| = \|n_H - q\|$, and show that the ratio can be arbitrary bad as the points approach the edge of the disk. 
Even though the ratio can become arbitrarily high as points approach the edge of the disk, for finite datasets, we prove the upper bound $\frac{R_{initial}}{L_{initial}} \leq O(\ln(\frac{1}{1-\|q\|^2}) + \ln(\frac{1}{1-\|n_E\|^2}))$. Practitioners can understand how long \binarySearchAlg might take in the worst case with some prior knowledge on the largest $\|x\|^2$ for $x \in \D$ in their dataset. We formalize this in Lemma \ref{lem:precision}.

\begin{lemma}
\label{lem:R_L_high}
Fix large $s$, and let $\gamma, \delta$ be such that $0 < \gamma < \delta < 1$, $\delta^{s+1} < \gamma < \delta^s$, and $\frac{\delta - 2\delta^s}{\delta + \delta^s} \geq \frac{1}{2}$. Further let the query $q = (0, 1- \frac{\gamma + \delta}{2})$, $n_E = (0, 1-\gamma)$, and $n_H = (0, 1-\delta)$. Then $\frac{R_{initial}}{L_{initial}} = \Omega (s)$. 
\end{lemma}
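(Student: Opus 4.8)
The plan is to evaluate the Poincar\'e distance formula (Eq.~\ref{eq:poincare_distance}) on the three collinear points $q$, $n_E$, $n_H$, all of which lie on the positive second-coordinate axis, and then divide. The first step is bookkeeping: record $\|q - n_E\| = \|q - n_H\| = \frac{\delta-\gamma}{2}$, $\|q\| = 1-\frac{\gamma+\delta}{2}$, $\|n_E\| = 1-\gamma$, $\|n_H\| = 1-\delta$, so that $1-\|q\|^2 = \frac{\gamma+\delta}{2}\bigl(2-\frac{\gamma+\delta}{2}\bigr)$, $1-\|n_E\|^2 = \gamma(2-\gamma)$, and $1-\|n_H\|^2 = \delta(2-\delta)$. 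The key structural observation is that the lower-bound point used in \binarySearchAlg, namely $\bigl(1-\frac{\|n_E-q\|}{\|q\|}\bigr)q$, has Euclidean norm $\|q\| - \|n_E-q\| = 1-\delta$ and points in the same direction as $q$; hence it equals $n_H$ exactly. Consequently $L_{initial} = d_H(q,n_H)$ and $R_{initial} = d_H(q,n_E)$, and the lemma reduces to the inequality $d_H(q,n_E)/d_H(q,n_H) = \Omega(s)$. Since $n_E$ and $n_H$ are equidistant from $q$ in the Euclidean metric, we use the convention, implicit in the lemma, that $\O(q,\D)$ returns $n_E$.

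The second step lower-bounds $R_{initial}$. Plugging into Eq.~\ref{eq:poincare_distance}, the $\arccosh$-argument is $1 + \frac{(\delta-\gamma)^2/2}{(1-\|q\|^2)\,\gamma(2-\gamma)}$. The hypothesis $\frac{\delta-2\delta^s}{\delta+\delta^s}\ge\frac12$ is equivalent to $\delta^{s-1}\le\frac15$, which combined with $\gamma<\delta^s$ yields $\gamma\le\frac\delta5$; then $\delta-\gamma\ge\frac45\delta$ and $1-\|q\|^2\le\gamma+\delta\le\frac65\delta$, so the fraction is at least $c_1\,\delta/\gamma$ for an absolute constant $c_1>0$ (one gets $c_1=\frac{2}{15}$), and since $\gamma<\delta^s$ this is at least $c_1\,\delta^{1-s}$. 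Using $\arccosh(x)\ge\ln x$ for $x\ge1$ then gives $R_{initial}\ge\ln\bigl(c_1\,\delta^{1-s}\bigr)=(s-1)\ln(1/\delta)+\ln c_1$.

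The third step upper-bounds $L_{initial}=d_H(q,n_H)$ by an absolute constant: its $\arccosh$-argument is $1+\frac{(\delta-\gamma)^2/2}{(1-\|q\|^2)\,\delta(2-\delta)}$, and from $(\delta-\gamma)^2\le\delta^2$, $1-\|q\|^2\ge\frac\delta2$, and $\delta(2-\delta)\ge\delta$ this argument is at most $2$, so $L_{initial}\le\arccosh 2$. Combining the last two steps, $\frac{R_{initial}}{L_{initial}}\ge\frac{(s-1)\ln(1/\delta)+\ln c_1}{\arccosh 2}=\Omega(s)$, treating $\delta$ as a fixed constant (and taking $s$ large enough that the numerator is positive).

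The only step that needs care is the second one: it is a short chain of inequalities, but one must invoke the hypotheses on $\gamma,\delta,s$ at the right places — $\delta^{s-1}\le\frac15$ (from the given fraction bound) to make $\gamma$ genuinely small compared to $\delta$, and $\gamma<\delta^s$ to convert $\delta/\gamma$ into the growing quantity $\delta^{1-s}$ — while keeping the $\delta$-dependent prefactors bounded. Everything else is direct substitution into Eq.~\ref{eq:poincare_distance}. I would also note in passing that the remaining hypothesis $\delta^{s+1}<\gamma$ is not used for the $\Omega(s)$ lower bound; it would serve to complement it with a matching $O(s)$ upper bound on $R_{initial}$.
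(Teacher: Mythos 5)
Your proof is correct and follows essentially the same route as the paper's: substitute the three collinear points into Eq.~\eqref{eq:poincare_distance}, use $\gamma<\delta^s$ together with the hypothesis $\frac{\delta-2\delta^s}{\delta+\delta^s}\ge\frac12$ to force the $\arccosh$-argument for $d_H(q,n_E)$ up to order $\delta^{1-s}$, bound $d_H(q,n_H)$ from above, and take the ratio (with the same implicit reading as the paper that $\delta$ is fixed and $s$ grows). The only differences are cosmetic improvements: you make explicit that the algorithm's initial lower-bound point coincides with $n_H$ (so $L_{initial}=d_H(q,n_H)$), and your upper bound $L_{initial}\le\arccosh 2$ is an absolute constant, slightly tighter than the paper's $\ln(6/\delta^2)$.
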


\begin{proof}
We start with the following 3 points: $n_E = (0, 1-\gamma)$, $q = \left(0, 1- \left(\frac{\gamma + \delta}{2}\right)\right)$, $n_H = (0, 1-\delta)$, where $0<\gamma <\delta < 1$. $q$ is exactly the midpoint between $n_E$ and $n_H$ in the Euclidean metric.

Now fix some very large constant $s$ where $s > 1$. Suppose that $\gamma$ is small enough that $\delta^{s+1} < \gamma < \delta^{s}$. Further suppose $s$ is large enough that $\frac{\left( \delta - 2\delta^s\right)}{\delta + \delta^s} \geq \frac{1}{2}$. We will show that when $\gamma$ and $\delta$ satisfy this regime, $d_H(n_E, q) / d_H(n_H, q) = \Omega(s)$, so to make this ratio very high, one can use a very large $s$. 

From \eqref{eq:poincare_distance}, we have $d_H(n_E, q) \geq \arccosh \left(1 + \frac{2 \left( \frac{\delta - \delta^s}{2}\right)^2}{(2\delta^s)(\delta + \delta^s)} \right)$. Note that:
\begin{align*}
\frac{\left(\frac{\delta - \delta^s}{2} \right)^2}{\delta + \delta^s}&= \frac{ \left( \frac{\delta}{2}\right)^2 - 2 \left( \frac{\delta}{2}\right) \left( \frac{\delta^s}{2}\right) + \left( \frac{\delta^s}{2}\right)^2}{\delta + \delta^s}\\
&\geq \frac{\frac{\delta^2}{4} - \frac{\delta^{s+1}}{2}}{\delta + \delta^s} = \frac{\frac{\delta}{4} \left( \delta - 2\delta^s\right)}{\delta + \delta^s} \geq \frac{\delta}{8}
\end{align*}
Therefore,
$$d_H(n_E, q) \geq \arccosh \left(1 + \frac{\delta}{8\delta^s} \right) = \arccosh \left(1 + \frac{1}{8\delta^{s-1}} \right)$$
$$d_H(n_H, q) \leq \arccosh \left(1 + \frac{2 \left( \frac{\delta - \delta^{s+1}}{2}\right)^2}{(\delta^2)\left(\frac{\delta + \delta^{s+1}}{2}\right)^2} \right) \leq \arccosh \left(1+ \frac{2}{\delta^2} \right)$$
Using the identity $\arccosh(x) = \ln (x + \sqrt{x^2-1})$, we have: 
$$d_H(n_E, q) \geq \ln \left(1 + \frac{1}{8\delta^{s-1}} \right) \geq \ln \left(\frac{1}{8\delta^{s-1}} \right)$$
$$d_H(n_H, q) \leq \ln \left( 2+ \frac{4}{\delta^2} \right) \leq \ln \left(\frac{6}{\delta^2} \right)$$

To conclude, we have: 
$$ \frac{d_H(n_E, q)}{d_H(n_H, q)} \geq \frac{\ln \left( \frac{1}{8\delta^{s-1}}\right)}{\ln \left(\frac{6}{\delta^2} \right)} = \frac{\ln \left( \frac{1}{\delta^{s-1}}\right) + \ln \left(\frac{1}{8}\right)}{\ln \left(\frac{1}{\delta^2} \right) + \ln(6)} = \frac{s-1}{2} + o(1)$$
Therefore, $ \frac{d_H(n_E, q)}{d_H(n_H, q)} =\frac{R_{initial}}{L_{initial}} = \Omega(s)$. 
\end{proof}

\begin{lemma}
\label{lem:precision}
\binarySearchAlg returns a c-approximate hyperbolic nearest neighbor in at most $\log_2 ((\log_2 b)/(\log c)) + O(1)$ rounds, where $b$ is the number of bits used to represent an arbitrary $x \in \D$.
\end{lemma}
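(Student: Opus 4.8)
The plan is to combine the round-complexity bound of Theorem~\ref{thm:binary_search} with the upper bound $\tfrac{R_{initial}}{L_{initial}} = O\!\left(\ln\tfrac{1}{1-\|q\|^2} + \ln\tfrac{1}{1-\|n_E\|^2}\right)$ recorded just above the statement, after translating the bit-precision parameter $b$ into a lower bound on $1-\|x\|^2$. If $n_E = q$ then \binarySearchAlg halts in zero rounds and there is nothing to prove, so assume $n_E \neq q$; then $L_{initial} = d_H\!\left(q,\left(1-\tfrac{\|n_E-q\|}{\|q\|}\right)q\right) > 0$ and both quantities above are finite and positive, so the ratio bound applies.

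First I would show that every $x\in\D$ satisfies $1-\|x\|^2 \ge 2^{-O(b)}$. Writing each of the $r$ coordinates of $x$ as a fixed-point dyadic rational with $O(b)$ bits, $\|x\|^2 = \sum_i x_i^2$ is a nonnegative rational whose denominator divides $2^{O(b)}$; since $x$ lies strictly inside the unit ball, $\|x\|^2 < 1$, hence $\|x\|^2 \le 1 - 2^{-O(b)}$, i.e.\ $1-\|x\|^2 \ge 2^{-O(b)}$ and $\ln\tfrac{1}{1-\|x\|^2} = O(b)$. Applying this to $x = n_E \in \D$, and (assuming the query is stored to comparable precision) to $x = q$, the ratio bound gives $\tfrac{R_{initial}}{L_{initial}} = O(b)$.

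It then remains to substitute into Theorem~\ref{thm:binary_search}. The number of rounds is at most
\[
\log_2\!\left(\frac{\log\!\left(\frac{R_{initial}}{L_{initial}}\right)}{\log c}\right) \;\le\; \log_2\!\left(\frac{\log\!\left(O(b)\right)}{\log c}\right) \;=\; \log_2\!\left(\frac{\log_2 b}{\log c}\right) + O(1),
\]
using $\log(O(b)) = \log b + O(1) = O(\log b)$ for $b \ge 2$, the fact that changing the base of the inner logarithms only shifts the outer expression by an additive constant (recall $R_{initial}/L_{initial}\ge 1$ so the argument is well defined), and that $c > 1$ is a fixed constant; for the finitely many small $b$ the round count is trivially $O(1)$ and is absorbed. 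This is the claimed bound.

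The main obstacle is the first step: pinning down exactly what ``$b$ bits'' means and showing it forces $1-\|x\|^2 \ge 2^{-O(b)}$ rather than, say, $2^{-2^{O(b)}}$ — the bound genuinely fails for a floating-point encoding with a $b$-bit exponent, so the statement must be read with a fixed-point (or otherwise polynomially-bounded-magnitude) convention, and the constant hidden in $O(b)$ depends on the dimension $r$ and on the normalization of the representation. Everything after that is an unwinding of Theorem~\ref{thm:binary_search} together with the already-established ratio bound.
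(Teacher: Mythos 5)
There is a genuine gap: your argument takes the inequality $\frac{R_{initial}}{L_{initial}} = O\!\left(\ln\frac{1}{1-\|q\|^2} + \ln\frac{1}{1-\|n_E\|^2}\right)$ as an already-available fact ``recorded just above the statement,'' but in the paper that sentence is only a forward announcement of Lemma~\ref{lem:precision} itself (``We formalize this in Lemma~\ref{lem:precision}''); the entire content of the paper's proof of this lemma is establishing that ratio bound. Concretely, the paper fixes $\eps = \|q-n_E\|$, replaces $n_H$ by the worst-case point at Euclidean distance $\eps$ from $q$ minimizing $d_H(q,\cdot)$ (which is what $L_{initial}$ measures), sets $\delta = 1-\|n_E\|^2$, and argues by cases ($\|n_E\|\ge 2\eps$ versus $\|n_E\|<2\eps$, and $\delta$ small versus large) that either both distances are $\arccosh(1+f_i)$ with $f_1/f_2 = O(1)$, so the ratio is $O(1)$, or else $d_H(q,n_H) = \Omega(1)$ while $d_H(q,n_E) = O\!\left(\ln\frac{1}{1-\|q\|^2}+\ln\frac{1}{1-\|n_E\|^2}\right)$. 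Without some argument of this kind your proof is circular with respect to the lemma it is meant to establish, since Lemma~\ref{lem:R_L_high} shows the ratio is not bounded by any absolute constant and so the bound genuinely needs the geometric case analysis.

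The parts you do supply are sound and in fact fill steps the paper leaves implicit: the observation that a fixed-point $b$-bit representation forces $1-\|x\|^2 \ge 2^{-O(b)}$ (and hence $\ln\frac{1}{1-\|x\|^2}=O(b)$, with the caveat that a floating-point exponent would break this and that the query must be stored to comparable precision), and the substitution into Theorem~\ref{thm:binary_search} with the bookkeeping that base changes and the $O(b)$ constant only contribute an additive $O(1)$ after the outer $\log_2$. So the structure of your write-up is the right shell; what is missing is the core estimate bounding $\frac{R_{initial}}{L_{initial}}$ in terms of $\ln\frac{1}{1-\|q\|^2}+\ln\frac{1}{1-\|n_E\|^2}$, which you would need to prove rather than cite.
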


\begin{proof}
We show $\frac{R_{initial}}{L_{initial}} \leq O\left(\ln(\frac{1}{1-\|q\|^2}) + \ln(\frac{1}{1-\|n_E\|^2}) \right)$. 
Let $q$ be the query, $n_E$ be the Euclidean nearest neighbor to $q$, and $n_H$ be the point such that $\|q - n_H\| = \|q - n_E\|$ and $d_H(q, n_H)$ is minimized. This maximizes $\frac{R_{initial}}{L_{initial}}$. Let $\eps = \|q - n_H\| = \|q - n_E\|$, let $\delta = 1-\|n_E\|^2$. 

First we assume the case that $\|n_E\| \geq 2\eps$. We have: $\|q\| \geq \|n_E\| - \eps$, so that $1 - \|q\|^2 \leq 1-\|n_e\|^2 + 2\eps\|n_E\| - \eps^2 \leq 1 - \|n_E\|^2 + 2\eps = \delta + 2\eps$. We also have that $\|n_H\| \geq \|n_E\| - 2\eps$, therefore $1-\|n_H\|^2 \leq 1 - \|n_E\|^2 + 4\eps \|n_E\| - 4\eps^2 \leq \delta + 4\eps$.  We can write $d_H(q, n_H) \geq \arccosh \left( 1 + \frac{2\eps^2}{(\delta + 2\eps)(\delta + 4\eps)}\right)$. Therefore, if $\delta < \eps$, then $d_H(q, n_H) = \Omega (1)$, and $d_H(q, n_E) = O\left(\ln(\frac{1}{1-\|q\|^2}) + \ln(\frac{1}{1-\|n_E\|^2}) \right)$. If $\delta \geq \eps$, then $\frac{d_H(q, n_E)}{d_H(q, n_H)} = \frac{\arccosh(1 + f_1)}{\arccosh(1+f_2)}$, where $\frac{f_1}{f_2} = \frac{\delta + 4\eps}{\delta} \leq 5$, so we conclude that $\frac{d_H(q, n_E)}{d_H(q, n_H)} = O(1)$ in this case. 

We consider the case that $\|n_E\| < 2\eps$. Suppose $1-\|n_E\|^2 = \delta > \frac{1}{2}$. Then $\frac{d_H(q, n_E)}{d_H(q, n_H)} = \frac{\arccosh(1 + f_1)}{\arccosh(1+f_2)}$, where $\frac{f_1}{f_2} = \frac{1-\|n_H\|^2}{1-\|n_E\|^2} \leq 2$, so $\frac{d_H(q, n_E)}{d_H(q, n_H)} = O(1)$ in this case. When $1-\|n_E\|^2 = \delta \leq \frac{1}{2}$, it follows that $\frac{1}{\sqrt{2}} \leq \|n_E\| < 2\eps$, so $\eps > \frac{1}{2\sqrt{2}}$. Therefore, $d_H(q, n_H) \geq \arccosh(1 + \frac{1}{4}) = \Omega (1)$. Therefore \\ $\frac{d_H(q, n_E)}{d_H(q, n_H)} = O\left(\ln(\frac{1}{1-\|q\|^2}) + \ln(\frac{1}{1-\|n_E\|^2}) \right)$ in this case. 
\end{proof}

\subsubsection{Integration with approximate Euclidean nearest neighbor oracles}
Since approximate nearest neighbor algorithms are heavily used, we consider \recenteringAlg and \binarySearchAlg when powered by approximate Euclidean nearest neighbor oracles $\tO$. We show, somewhat surprisingly, that replacing the exact Euclidean nearest neighbor oracle by an approximate oracle can cause both algorithms to return points with arbitrarily bad approximation ratios. The next section shows how approximate oracles can be used to derive approximate hyperbolic nearest neighbor algorithms.

\begin{lemma}
For any $\eps > 0$, \recenteringAlg using a $(1+\eps)$-approximate Euclidean nearest neighbor oracle $\tO$ can return an approximate hyperbolic nearest neighbor with an arbitrarily bad approximation ratio. 
\end{lemma}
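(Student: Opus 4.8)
The plan is to construct an explicit dataset and query in low dimension (one dimension suffices, as in the paper's other hard-instance lemmas) on which \recenteringAlg powered by $\tO$ gets stuck returning a point that is much farther from $q$ in hyperbolic distance than the true hyperbolic nearest neighbor. The key leverage is that an approximate oracle $\tO$ is only required to return a point within Euclidean distance $(1+\eps)$ of the Euclidean nearest neighbor, so we can place a ``decoy'' point $p_{bad}$ that is a valid $(1+\eps)$-approximate Euclidean answer for \emph{every} recentered query the algorithm ever issues, while a genuinely much closer hyperbolic neighbor $p_{good}$ sits near the boundary of the unit disk. Because $p_{bad}$ is always an admissible output of $\tO$, the while-loop condition $d_H(n_E, q) \neq d_H(n_H, q)$ triggers a spurious termination: after $\tO$ returns $p_{bad}$ twice in a row (once as $n_H$, once as $n_E$ on the recentered query), the loop exits and $p_{bad}$ is returned.

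Concretely, I would work in $\H_1$, place $q$ close to the boundary, let $p_{good}$ be even closer to the boundary on the same side so that its Euclidean distance to $q$ is tiny but — crucially — so is its hyperbolic distance (points near the boundary on the same ray can be hyperbolically close), and let $p_{bad}$ be a point much nearer the origin whose Euclidean distance to $q$ is only slightly larger than $\|q - p_{good}\|$, within the $(1+\eps)$ slack, but whose hyperbolic distance to $q$ is large (of order $\ln\frac{1}{1-\|q\|^2}$) because it traverses the ``expensive'' central region. I then need to check that after one recentering step — drawing the hyperbolic ball through $p_{bad}$ and taking its Euclidean center $q_{new}$ — the point $p_{bad}$ remains a legitimate $(1+\eps)$-approximate Euclidean nearest neighbor to $q_{new}$ as well (it is the exact nearest neighbor there if nothing else is inside that ball, so this is automatic), hence $\tO$ may again return $p_{bad}$ and the algorithm halts. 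Letting $\|q\| \to 1$ while scaling $p_{good}$ appropriately drives $d_H(q,p_{bad})/d_H(q,p_{good}) \to \infty$, giving an arbitrarily bad ratio for any fixed $\eps > 0$.

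The routine part is the arithmetic verifying the three distance comparisons: $\|q - p_{bad}\| \le (1+\eps)\|q - p_{good}\|$, that $p_{bad}$ is the exact Euclidean nearest neighbor in both the original ball and the recentered ball (so $\tO$'s output set legitimately includes $p_{bad}$), and the asymptotics $d_H(q,p_{good}) = O(1)$ (or at least $o(d_H(q,p_{bad}))$) versus $d_H(q,p_{bad}) = \Omega(\ln\frac{1}{1-\|q\|^2})$, all of which follow from \eqref{eq:poincare_distance} and the $\arccosh(x) = \ln(x+\sqrt{x^2-1})$ identity exactly as in Lemma \ref{lem:R_L_high}. The main obstacle is the adversarial-consistency requirement: I must ensure $p_{bad}$ is a simultaneously valid $\tO$-answer at \emph{every} query point the algorithm could reach, so that the adversary can force the bad output without contradicting $\tO$'s contract. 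The clean way to handle this is to make the dataset small enough (just $\{p_{good}, p_{bad}\}$, plus possibly $q$ itself if we want to rule out an exact match) and geometrically arrange that $p_{good}$ never lies inside the Euclidean ball centered at $q_{new}$ with $p_{bad}$ on its boundary — equivalently, that the single recentering triggered by $p_{bad}$ does not ``see'' $p_{good}$ — after which $p_{bad}$ is trivially the exact Euclidean nearest neighbor at $q_{new}$ and the approximate oracle's freedom is never even needed past the first call. I would double-check that $p_{good}$'s position near the boundary on $q$'s ray indeed keeps it outside that recentered ball; if it does not, I shift $p_{good}$ slightly off-ray or adjust radii, trading a small constant in the final ratio for geometric cleanliness.
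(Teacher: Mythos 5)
There is a genuine gap: your construction has the Poincar\'e geometry inverted. In the disk the metric blows up near the \emph{boundary} (conformal factor $2/(1-\|x\|^2)$), not near the center, so a decoy that is hyperbolically much worse than the true neighbor while staying within the oracle's $(1+\eps)$ Euclidean slack must lie much \emph{closer to the boundary} than the true neighbor --- not ``much nearer the origin.'' Quantitatively, if $\|q-p_{bad}\|\le(1+\eps)\|q-p_{good}\|$ and $p_{bad}$ is no closer to the boundary than $p_{good}$ (so $1-\|p_{bad}\|^2\ge 1-\|p_{good}\|^2$), then by \eqref{eq:poincare_distance} we get $\cosh d_H(q,p_{bad})-1\le(1+\eps)^2\left(\cosh d_H(q,p_{good})-1\right)$, and the same $\arccosh$ comparison used in the \bandedAlgNS\ analysis (with $w=1$) gives $d_H(q,p_{bad})\le(1+\eps)\,d_H(q,p_{good})$. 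So in your layout the returned point is automatically a $(1+\eps)$-approximate hyperbolic neighbor and the ratio can never be driven to infinity. Your placement is also internally inconsistent: a point ``much nearer the origin'' is at Euclidean distance $\Theta(1)$ from a near-boundary $q$, which cannot be within a $(1+\eps)$ factor of the tiny distance $\|q-p_{good}\|$; and the intuition that the decoy's distance is large ``because it traverses the expensive central region'' is exactly backwards --- the central region is cheap, the near-boundary region is expensive.

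The paper's proof uses the correct orientation: $q=(0,y)$, decoy $n_E=(0,y+r)$ toward the boundary, true neighbor $n_H=(0,y-r)$, with $y=1-\frac{\gamma+\delta}{2}$, $r=\frac{\delta-\gamma}{2}$ and $\gamma$ between $\delta^{s+1}$ and $\delta^{s}$, so the two Euclidean distances are equal (either may be returned first) while $d_H(q,n_E)/d_H(q,n_H)=\Omega(s)$ as in Lemma~\ref{lem:R_L_high}. A second point where your plan conflicts with the geometry: since the good point is hyperbolically closer, it necessarily lies \emph{inside} the hyperbolic ball around $q$ through the decoy, hence inside the recentered Euclidean ball, and the recentered center $n_c$ is pulled toward the origin so that its \emph{exact} Euclidean nearest neighbor is the good point. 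The $(1+\eps)$ slack is therefore needed precisely at the recentered query --- this is the paper's condition $b>\frac{r(4+3\eps)}{\eps}$, equivalently $\|n_c-n_E\|\le(1+\eps)\|n_c-n_H\|$, which requires the asymmetry of the hyperbolic ball ($b\gg r$) and in turn the near-boundary placement of the decoy --- and cannot be dispensed with by arranging that the good point is invisible after recentering.
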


\begin{proof}
Suppose $q = (0, y)$, and $n_E = (0, y+r)$ and $n_H = (0, y-r)$ for $r > 0$. 

Then we have: 
\[
d_H(q, n_E) = \arccosh \left( 1 + \frac{2r^2}{(1-y^2)(1-(y+r)^2)} \right)
\]

The bottom of the hyperbolic circle with radius $d_H(q, n_E)$ is a point $B = (0, y-b)$ for $0<b$ that satisfies: 
\[
d_H(q, B) = d_H(q, n_E) = \arccosh \left( 1 + \frac{2b^2}{(1-y^2)(1-(y-b)^2)} \right)
\]

The Euclidean center, denoted $n_c$ is $\frac{y+r+y-b}{2} = \frac{2y+r-b}{2} = y + \frac{r-b}{2}$. 

In order for \recenteringAlg to fail with a $(1+\eps)$-Euclidean oracle, we want $y-r > y + \frac{r-b}{2} + \frac{d_E(n_c, n_E)}{1+\eps}$, where $d_E(n_c, n_E) = \frac{r+b}{2}$. This means that we want $b > \frac{r(4+3\eps)}{\eps}$. 

We want $b$ such that 
\[
\frac{r^2}{1-(y+r)^2} = \frac{b^2}{1-(y-b)^2}
\]
This implies that $b = \frac{r-ry^2}{1-y^2-2ry}$. Combined with the condition that $b > \frac{r(4+3\eps)}{\eps}$, we want: 
\[
\frac{1-y^2}{1-y^2-2ry} > \frac{4}{\eps} + 3
\]

Now we substitute in $y = 1 - \frac{\delta + \gamma}{2}$ and $r = \frac{\delta - \gamma}{2}$, and we maintain the condition that $\delta^{s+1} < \gamma < \delta^{s}$. 
This implies: 
\begin{align*}
\frac{1-y^2}{1-y^2-2ry} &= \frac{\delta + \gamma - \left( \frac{\gamma + \delta}{2}\right)^2}{\delta + \gamma - \left( \frac{\gamma + \delta}{2}\right)^2 - 2 \left( \frac{\delta - \gamma}{2}\right)\left(1 - \frac{\delta + \gamma}{2} \right)} \\
& \geq \frac{\delta + \delta^{s+1} - \left( \frac{\delta^{s} + \delta}{2}\right)^2}{\delta + \gamma - \left( \frac{\gamma + \delta}{2}\right)^2 - 2 \left( \frac{\delta - \gamma}{2}\right) + 2\left(\frac{\delta - \gamma}{2} \right)\left(\frac{\delta + \gamma}{2} \right)} \\
& =  \frac{\delta + \delta^{s+1} - \left( \frac{\delta^{s} + \delta}{2}\right)^2}{2\gamma - \left( \frac{\gamma + \delta}{2}\right)^2 + 2\left(\frac{\delta - \gamma}{2} \right)\left(\frac{\delta + \gamma}{2} \right)} \\
& \geq  \frac{\delta + \delta^{s+1} - \left( \frac{\delta^{s} + \delta}{2}\right)^2}{2\gamma + \left( \frac{\gamma + \delta}{2}\right)^2}
\end{align*}

Note that since $\left(\frac{\delta + \gamma}{2}\right)^2 = \left( \frac{\delta}{2}\right)^2 + \frac{\delta \cdot \gamma}{2} + \left( \frac{\gamma}{2}\right)^2 \leq \left( \frac{\delta}{2}\right)^2 + \frac{\delta^{s+1}}{2} + \left( \frac{\delta^s}{2}\right)^2$
We therefore have, 
\[
\frac{1-y^2}{1-y^2-2ry} \geq \frac{\delta + \delta^{s+1} - \left( \frac{\delta^{s} + \delta}{2}\right)^2}{2\delta^s + \left( \frac{\delta}{2}\right)^2 + \frac{\delta^{s+1}}{2} + \left( \frac{\delta^s}{2}\right)^2} = \frac{\delta + o(\delta)}{\frac{\delta^2}{4} + o(\delta^2)} = \Theta \left(\frac{1}{\delta}\right)
\]

Suppose that $\frac{1-y^2}{1-y^2-2ry} \geq \frac{k_1}{\delta}$ for some $k_1>0$. Then we need $\delta$ such that 
\[
\frac{k_1}{\delta} > \frac{4}{\eps} + 3
\]
This implies that $\delta < \frac{k_1 \cdot \eps}{4 + 3\eps}$. So for sufficiently small $\delta$, \recenteringAlg will fail to find $n_H$ during the recentering phase. Moreover, for sufficiently small $\eps$, given $\delta$, the ratio $d_H(q, n_E) / d_H(q, n_H)$ can be arbitrarily high. Therefore we conclude that \recenteringAlg with a $(1+\eps)$ approximate Euclidean oracle can return an answer with arbitrarily high approximation ratio.
\end{proof}

\begin{lemma}
For any $\eps > 0$, \binarySearchAlg using a $(1+\eps)$-approximate Euclidean nearest neighbor oracle $\tO$ can return an approximate hyperbolic nearest neighbor with an arbitrarily bad approximation ratio. 
\end{lemma}

The proof is similar and is in the appendix. 

\section{Approximate Near Neighbors}
The previous section shows that \recenteringAlg and \binarySearchAlg cannot guarantee a close hyperbolic nearest neighbor when using $\tO$, an approximate Euclidean nearest neighbor oracle. We now develop \bandedAlgNS, which uses $\tO$ to return neighbors with provable guarantees on the hyperbolic approximation ratio. 

Our idea is inspired by the formula for hyperbolic distance in $\H_r$. 
\[
d_H(q, x) = \arccosh\left(1 + \frac{\|q - x\|^2}{(1-\|q\|^2)(1-\|x\|^2)}\right) \eqref{eq:poincare_distance}
\]

If 2 points $x_1, x_2$ are such that $\|x_1\| \approx \|x_2\|$, then finding the nearer neighbor to $q$ reduces to minimizing $\|q - x\|$, which is a Euclidean nearest neighbor problem. Our overall scheme divides the dataset based on their Euclidean squared distance to the origin. Each batch of points in an annulus is organized into its own data structure that $\tO$ accesses. We probe relevant batches and return the best approximate nearest neighbor that we find from the different partitions.

In the preprocessing to divide $\D$, we take the multiplicative width of each annulus $w > 1$, and put into the $i$-th annulus, or partition, all data points $x$ such that $w^{i-1} \leq \frac{1}{1-\|x\|^2} \leq w^{i}$. The width $w$ controls the granularity of the $\ell_2$ norm at which we divide the dataset. 

The nearest neighbor algorithm, \bandedAlg probes different annuli using $\tO$ and returns the nearest hyperbolic neighbor from among $n_F$ returned by $\tO$ applied to each partition. One important detail is which partitions to probe and in which order. Algorithm \ref{alg:nn_banded_alg} offers one strategy. We first probe the band that the query falls into, $i$. Then we maintain two lists. The first list contains the indices higher than $i$ in sorted order, the other list contains the lowest. We choose from the top of the two lists, based on which choice maximizes the radius of the hyperbolic ball around $q$ that is completely covered by the union of bands probed so far as well as the new band under consideration. This is implemented in \decisionProbeBandMaxRadiusNS. We terminate based on \decisionProbeBandNS, which takes $n_H$, the best hyperbolic nearest neighbor found so far, and checks if there exists $x \in \H_r$ such that $w^{b-1} \leq \frac{1}{1-\|x\|^2}\leq w^b$ and also belongs to $\B_H(q, d_H(q, n_H))$.

\begin{algorithm}[!htp]
\caption{\bandedAlg}
\label{alg:nn_banded_alg}
\begin{algorithmic}[1]
\REQUIRE{query $q$, approximate Euclidean nearest neighbor oracle $\tO$, partitions $B$ with width $w$}
\STATE $i \leftarrow \lc\frac{-\log (1-\|q\|^2)}{\log w}\rc$ \text{ // the annulus that the query belongs to}
\STATE $ProbingListTop \leftarrow [i+1, i+2, \ldots]$ 
\STATE \text{ // the list of partitions arranged in probing order}
\STATE $ProbingListBottom \leftarrow [i-1, i-2, \ldots]$
\STATE \text{ // the list of partitions arranged in probing order}
\STATE $n_H \leftarrow \tO(q, B[i])$ \text{ // current best nearest neighbor candidate}
\STATE $dist_H  \leftarrow d_H(q, n_H)$
\STATE \text{ // hyperbolic distance of current best nearest neighbor candidate}
\WHILE {$\decisionProbeBand(q, n_H, w, ProbingListTop[0])$ or $\decisionProbeBand(q, n_H, w, ProbingListBottom[0])$}
\STATE ${ band = \decisionProbeBandMaxRadius(q, n_H, w, ProbingListTop[0], ProbingListBottom[0])}$
\STATE $n_F \leftarrow \tO(q, B[band])$
\IF {$d_H(q, n_F) < dist_H$}
\STATE $dist_H \leftarrow d_H(q, n_F)$
\STATE $n_H \leftarrow n_F$
\ENDIF
\ENDWHILE
\STATE \textbf{Return} $n_H$
\end{algorithmic}
\end{algorithm}

The routine calculations for $\decisionProbeBandMaxRadius$ and $\decisionProbeBand$ use elementary properties of hyperbolic geometry and are in the appendix. 

\bandedAlgNS, Algorithm \ref{alg:nn_banded_alg} provides the following approximation guarantee: 
\begin{theorem}
Using a $(1+\eps)$-Euclidean nearest neighbor oracle $\tO$ and a dataset split with a multiplicative width of $w$, \bandedAlg returns a hyperbolic approximate nearest neighbor $n_H$ to any query $q$ such that $d_H(q, n_H) \leq \sqrt{w}(1+\eps) d_H(q, n^*)$, where $n^*$ is the exact hyperbolic nearest neighbor. 
\end{theorem}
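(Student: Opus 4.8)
I would split the argument into two parts: (i) \emph{termination soundness} --- the exact hyperbolic nearest neighbor $n^*$ always lies in a band that \bandedAlgNS\ actually probes; and (ii) \emph{per-band distortion} --- the hyperbolic distance of the point returned from that band is at most $(1+\eps)\sqrt{w}$ times $d_H(q, n^*)$. The degenerate case $n^* = q$ is immediate: $q \in B[i]$, so the very first probe $\tO(q, B[i])$ returns $q$. Assume $n^* \neq q$ henceforth.

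For (i), I would exploit the fact recalled in the preliminaries that $\B_H(q, r)$ is a Euclidean ball $\B_E(q', r')$, hence convex, and that it contains its hyperbolic center $q$, which sits in the query band $i$. Suppose, toward a contradiction, that on loop exit $n^*$ lies in an unprobed band $b$; say $b > i$ (the case $b < i$ is symmetric). Since $n^*$ is the hyperbolic nearest neighbor, $d_H(q, n^*) \le d_H(q, n_H)$, so $n^* \in \B_H(q, d_H(q, n_H))$. Because bands are popped from the front of $ProbingListTop$, its head $c := ProbingListTop[0]$ is the smallest unprobed band index exceeding $i$, so $i < c \le b$. Along the Euclidean segment from $q$ to $n^*$, the value $\|\cdot\|^2$ is continuous and runs between $\|q\|^2$ and $\|n^*\|^2$, so by the intermediate value theorem the segment meets band $c$; and by convexity that intersection point lies in $\B_H(q, d_H(q, n_H))$. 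Hence band $c$ would have passed the \decisionProbeBandNS\ test, contradicting the exit condition. So $n^*$ lies in a band $B[j]$ that was probed.

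For (ii), let $n_F = \tO(q, B[j])$ be the point returned when $B[j]$ was probed, and let $n_E^{(j)}$ be the exact Euclidean nearest neighbor of $q$ within $B[j]$. The oracle guarantee together with $n^* \in B[j]$ gives $\|q - n_F\| \le (1+\eps)\|q - n_E^{(j)}\| \le (1+\eps)\|q - n^*\|$. Since $n_F$ and $n^*$ both lie in band $j$, we have $\frac{1}{1-\|n_F\|^2} \le w \cdot \frac{1}{1-\|n^*\|^2}$, so the argument of $\arccosh$ in the formula for $d_H(q, n_F)$ is at most $(1+\eps)^2 w$ times the one for $d_H(q, n^*)$. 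I would then invoke the scaling inequality $\arccosh(1+\alpha u) \le \sqrt{\alpha}\,\arccosh(1+u)$ for $u \ge 0$ and $\alpha \ge 1$ --- which I would prove in the appendix by checking $u\,\tfrac{d}{du}\arccosh(1+u) \le \tfrac12 \arccosh(1+u)$ and integrating --- applied with $\alpha = (1+\eps)^2 w$, to obtain $d_H(q, n_F) \le (1+\eps)\sqrt{w}\, d_H(q, n^*)$. Finally, $n_H$ is maintained as the hyperbolically closest point over all probed bands, so $d_H(q, n_H) \le d_H(q, n_F) \le (1+\eps)\sqrt{w}\, d_H(q, n^*)$, as claimed.

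I expect the main obstacle to be part (i): the termination argument only goes through once one verifies, against the appendix computations for \decisionProbeBandNS\ and \decisionProbeBandMaxRadiusNS, that $ProbingListTop[0]$ and $ProbingListBottom[0]$ are genuinely the innermost unprobed bands on each side of $i$ (i.e.\ that \decisionProbeBandMaxRadiusNS\ always consumes the front of the list it selects), so that a single failed \decisionProbeBandNS\ check at each list head, combined with the convexity/IVT argument above, certifies that every farther unprobed band is also irrelevant. Part (ii) is then a routine application of the $\arccosh$ scaling bound.
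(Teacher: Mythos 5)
Your proposal is correct, and its core is the same argument as the paper's: reduce to (a) the band containing $n^*$ is probed, (b) the oracle guarantee within that band gives $\|q-n_F\|\le(1+\eps)\|q-n^*\|$, (c) band membership lets you compare the two $\arccosh$ arguments up to a factor $w(1+\eps)^2$, and (d) sub-square-root growth of $\arccosh(1+\cdot)$ converts that factor into $\sqrt{w}(1+\eps)$ on distances. The only substantive differences are in how you handle (d) and (a): the paper proves (d) by comparing power series, showing $\frac{\cosh(D)-1}{w(1+\eps)^2}\ge\cosh\bigl(\tfrac{D}{\sqrt{w}(1+\eps)}\bigr)-1$ coefficientwise, whereas you prove the equivalent scaling inequality $\arccosh(1+\alpha u)\le\sqrt{\alpha}\,\arccosh(1+u)$ from the logarithmic-derivative bound $u\,\tfrac{d}{du}\arccosh(1+u)\le\tfrac12\arccosh(1+u)$, which does hold (the derivative is $1/\sqrt{u(u+2)}$, and the difference of the two sides is increasing from $0$), so your route is valid; and the paper simply asserts (a) (``$n^*$ is organized into a bucket that \bandedAlgNS\ is guaranteed to probe''), while you supply a convexity/intermediate-value argument against the \decisionProbeBandNS\ exit condition, which is sound under the (intended) invariant you flag yourself, namely that \decisionProbeBandMaxRadiusNS\ consumes list heads so the probed bands form contiguous ranges around the query's band. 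Net effect: both arguments deliver the same bound; yours makes the coverage/termination step explicit, which the paper leaves implicit, at the cost of leaning on that implementation invariant.
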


\begin{proof}
The true hyperbolic nearest neighbor, $n^*$ is organized into a bucket $j$ that \bandedAlg is guaranteed to probe. Suppose that instead of finding $n^*$, the algorithm finds $n_H$ in bucket $j$. The hyperbolic distance between the query $q$ and $n_H$, $D$, is upper bounded by 
\[
D \leq \arccosh \left(1 + \frac{2\|q - n_H\|^2 \cdot w^j}{(1-\|q\|^2)} \right)~,
\]
where the inequality comes from the guarantee that all elements $y$ in bucket $j$ satisfy $w^{j-1} \leq \frac{1}{1-\|y\|^2} \leq w^j$. This also implies that $\frac{\|q-n_H\|^2}{1-\|q\|^2} \geq \frac{\cosh(D) - 1}{2w^j}$. 

In the worst case, the true nearest neighbor $n^*$ is such that $\|q-n^*\|$ is much smaller than $\|q-n_H\|$ and also $\frac{1}{1-\|n^*\|^2}$ is much smaller than $\frac{1}{1-\|n_H\|^2}$. To make $\|q-n^*\|$ small, the worst case is that $n^*$ is actually the nearest neighbor in bucket $j$ to $q$. However, the guarantee of the approximate Euclidean oracle is that $\|q-n_H\| \leq (1+\eps) \|q-n^*\|$, so that $\|q-n^*\|^2 \geq \frac{\|q-n_H\|^2}{(1+\eps)^2}$. We also have that $d_H(q, n^*) = \arccosh\left(1 + \frac{2\|q - n^*\|^2}{(1-\|q\|^2)(1-\|n^*\|^2)}\right) \geq \arccosh\left(1 + \frac{\cosh(D)-1}{w(1+\eps)^2}\right)$. 

Now we want to analyze $\frac{D}{\arccosh\left(1 + \frac{\cosh(D)-1}{w(1+\eps)^2}\right)}$. \\
\begin{align*}
\arccosh\left(1 + \frac{\cosh(D)-1}{w(1+\eps)^2}\right) &= \arccosh \left(1 + \frac{\frac{e^D + e^{-D}}{2}-1}{w(1+\eps)^2} \right) \\
& = \arccosh \left(1 + \frac{\sum\limits_{i=1}^\infty \frac{D^{2i}}{(2i)!}}{w(1+\eps)^2} \right)  \geq \left( 1 + \sum\limits_{i=1}^\infty \frac{\left(\frac{D}{\sqrt{w}(1+\eps)} \right)^{2i}}{(2i)!}\right)\\
& = \arccosh(1 + \cosh(\frac{D}{\sqrt{w}(1+\eps)} -1)  = \frac{D}{\sqrt{w}(1+\eps)}
\end{align*}
Therefore, we conclude that $d_H(q, n_H) \leq \sqrt{w}(1+\eps)d_H(q, n^*)$. 
\end{proof}

The runtime of \bandedAlg depends on the number of partitions that are probed, which we now analyze. 

We first define $b_1, b_2, i_q$. Let $x = \argmax\limits_{z \in \B_H(q, d_H(q, n_H))} \|z\|$, and let $b_1 = \lc\frac{-\log(1-\|x\|^2)}{\log(w)}\rc$ denote the index of the partition that $x$ falls into, which is also the largest index that intersects this hyperbolic ball. Let $y = \argmin\limits_{z \in \B_H(q, d_H(q, n_H))} \|z\|$, and let $b_2 = \lf\frac{-\log(1-\|y\|^2)}{\log(w)}\rf$ be the index of the partition that $y$ falls into, which is also the smallest index possible that intersects the hyperbolic ball when 0 is not contained in this hyperbolic ball. When 0 is contained in the hyperbolic ball, the smallest partition index that intersects the hyperbolic ball is 1. 

\begin{lemma}
For a query $q$, suppose that $n_H$ is the approximate hyperbolic nearest neighbor returned by \bandedAlg. Further suppose that $d_H(0, q) > d_H(q, n_H)$. Then the number of partitions probed is $b_1 - b_2 + 1$. 
\end{lemma}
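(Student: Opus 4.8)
The plan is to forget the geometry of the underlying dataset almost entirely and reduce the count to the combinatorial structure of \emph{which} bands get probed and \emph{in what order}, then read off the answer from a short geometric identification. The central reduction is to show that the selection rule \decisionProbeBandMaxRadiusNS{} is equivalent to a ``nearest band first'' expansion. Concretely, write $d(k) := \min\{\, d_H(q,z) : z \in \H_r,\ w^{k-1} \le \tfrac{1}{1-\|z\|^2} \le w^k \,\}$ for the hyperbolic distance from $q$ to band $k$; then $d(i)=0$, and $d$ is unimodal in $k$ with its minimum at $i$ (for $k>i$ the closest point of band $k$ to $q$ lies on the ray $\overline{Oq}$ at norm $\sqrt{1-w^{-(k-1)}}$, and $d_H$ increases as we slide outward along that ray; symmetrically for $k<i$). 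With this in hand I would argue by induction that the probed bands always form a contiguous block $[a,b] \ni i$, and that at state $[a,b]$ the choice maximizing the covered radius is exactly to adjoin the \emph{closer} of the two frontier bands $a-1,b+1$: adjoining the closer one raises (or at worst matches) $\min(d(\text{inner frontier}),d(\text{outer frontier}))$, while adjoining the farther one cannot push the covered ball past the distance to the nearer frontier. Thus at every moment the probed set is precisely the $m$ bands closest to $q$ for the current $m$, and the loop continues exactly while the $(m+1)$-st closest band has $d(\cdot) \le dist_H$ for the current value of $dist_H$.

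Next I would translate $b_1,b_2$ into this language. Because the hypothesis $d_H(0,q) > d_H(q,n_H)$ puts the origin strictly outside the final ball $\B_H(q,d_H(q,n_H))$, and this ball is a Euclidean ball, its set of attained norms is a genuine interval $[\|y\|,\|x\|]$ with $0<\|y\|\le\|x\|<1$; since the partition index is monotone in $\|\cdot\|$, the ceiling/floor conventions in the definitions of $b_1$ and $b_2$ give exactly that the bands meeting the final ball are $\{k : d(k) \le d_H(q,n_H)\} = \{b_2,\dots,b_1\}$. (This is the only place the hypothesis is used: it is what forces $b_2$ to be the floor expression rather than the smallest index.)

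Finally I would run the dynamic argument. The quantity $dist_H$ is non-increasing and always $\ge d_H(q,n_H)$ (since $n_H$ is the best point over all probed bands), and it has reached its terminal value $d_H(q,n_H)$ by the time the band $k^\star$ containing the returned $n_H$ is probed, because $n_H = \tO(q,B[k^\star])$. Now $d(k^\star) \le d_H(q,n_H)$, so $k^\star \in \{b_2,\dots,b_1\}$; and every band in $\{b_2,\dots,b_1\}$ has $d(\cdot)\le d_H(q,n_H)$ while every band outside has $d(\cdot) > d_H(q,n_H) \ge dist_H$ throughout the run. Hence while $\{b_2,\dots,b_1\}$ is not yet fully probed, one of the two frontier bands lies in $\{b_2,\dots,b_1\}$ (namely one pointing toward the unprobed part), so that band passes the \decisionProbeBandNS{} test and, being the nearer frontier, is the one \decisionProbeBandMaxRadiusNS{} selects — so the algorithm keeps probing, and only ever inside $\{b_2,\dots,b_1\}$, until that set is exhausted. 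By then $dist_H = d_H(q,n_H)$, and both frontier bands are now outside $\{b_2,\dots,b_1\}$, so both fail \decisionProbeBandNS{} and the loop halts. Therefore the set of probed bands is exactly $\{b_2,\dots,b_1\}$, of cardinality $b_1-b_2+1$.

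The main obstacle is the first step: rigorously establishing that \decisionProbeBandMaxRadiusNS{} coincides with nearest-first expansion — including the unimodality of $d(\cdot)$, the bookkeeping that the probed set stays a contiguous interval about $i$, and tie-handling when the two frontiers are equidistant. Once that combinatorial picture is secured, the geometric identification of $\{b_2,\dots,b_1\}$ and the monotonicity/timing argument for $dist_H$ are routine.
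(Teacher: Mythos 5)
You should know at the outset that the paper never actually proves this lemma: it is stated bare, and the appendix only specifies the subroutines \decisionProbeBandNS{} and \decisionProbeBandMaxRadiusNS. So your proposal is supplying a missing argument rather than paralleling one, and its overall shape is viable. The reduction you defer as the ``main obstacle'' does hold: if the probed block is $[a,b]\ni i$ and $D(v)$ denotes the hyperbolic distance from $q$ to the sphere $\frac{1}{1-\|z\|^2}=w^{v}$, then the two covered radii compared by \decisionProbeBandMaxRadiusNS{} are $d_1=\min\{D(b+1),D(a-1)\}$ (adjoin the top frontier) and $d_2=\min\{D(b),D(a-2)\}$ (adjoin the bottom frontier), and strict monotonicity of $D$ on either side of $q$'s level gives $d_1\ge d_2$ exactly when $D(b)\le D(a-1)$, i.e.\ when the top frontier is the nearer band; contiguity of the block is automatic because only frontier bands are ever probed. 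One inequality in your dynamic argument is backwards: $dist_H\ge d_H(q,n_H)$ at all times, not the reverse, so ``every band outside has $d(\cdot)>d_H(q,n_H)\ge dist_H$ throughout the run'' is false early on. Fortunately you never need out-of-range bands to fail \decisionProbeBandNS{} at intermediate times; you only need that a frontier meeting the final ball always passes (final ball $\subseteq$ current ball) and is strictly nearer than any band not meeting it, hence is the one selected, and that by the time such frontiers are exhausted the current ball equals the final ball.

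The genuine gap is the sentence identifying $\{b_2,\dots,b_1\}$ with the bands meeting the final ball. Under \bandingDatasetAlgNS{} a point with $v:=\log_w\frac{1}{1-\|z\|^2}$ is assigned to band $\lceil v\rceil$, so the lowest band meeting $\B_H(q,d_H(q,n_H))$ is $\lceil v(y)\rceil$, which exceeds $b_2=\lfloor v(y)\rfloor$ whenever $v(y)$ is not an integer; in that generic case band $b_2$ does not meet the final ball and its distance to $q$ exceeds $d_H(q,n_H)$, so your claims ``every band in $\{b_2,\dots,b_1\}$ has $d(\cdot)\le d_H(q,n_H)$'' and ``the in-range frontier is the nearer one'' both fail for $b_2$. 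The count $b_1-b_2+1$ is nevertheless correct, but only because \decisionProbeBandNS{} (Algorithm \ref{alg:probe_bucket}) uses the same floor on the low side: it returns True for every $b\ge\lfloor v(\text{lowest point of the current ball})\rfloor$, i.e.\ the algorithm deliberately probes one band below the lowest band that intersects the ball. So the bottom terminus must be tied to that test rather than to intersection: while any band of $\{b_2+1,\dots,b_1\}$ is unprobed, the frontier pointing at it passes and is strictly nearer than $b_2$, so $b_2$ is probed last; at that moment $dist_H=d_H(q,n_H)$, the top frontier $b_1+1$ fails, $b_2$ passes the floor test and is probed, and then $b_2-1$ fails it, giving exactly $b_1-b_2+1$ probes. (The hypothesis $d_H(0,q)>d_H(q,n_H)$ is also what keeps the $t_2\le 0$ branch of \decisionProbeBandNS{} from firing once the ball has its final size, so bottom probing stops at $b_2$ rather than at band $1$, consistent with the companion lemma.) With that bookkeeping correction, plus the tie-handling you already flag, your plan goes through.
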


\begin{lemma}
For a query $q$, suppose that $n_H$ is the approximate hyperbolic nearest neighbor output of \bandedAlg. Further suppose that $d_H(0, q) \leq d_H(q, n_H)$. Then the number of partitions probed is $b_1$. 
\end{lemma}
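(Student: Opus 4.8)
The plan is to reduce the lemma to the assertion that \bandedAlg probes exactly the partitions that intersect $\B_H(q, d_H(q,n_H))$ for the \emph{returned} neighbor $n_H$, and then to compute that, under the hypothesis $d_H(0,q)\le d_H(q,n_H)$, this collection of partitions is precisely $\{1,2,\dots,b_1\}$.

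For the reduction I would first note that $dist_H=d_H(q,n_H)$ in Algorithm~\ref{alg:nn_banded_alg} is non-increasing across iterations, since $n_H$ is overwritten only when a strictly closer point is found; hence the Euclidean ball $\B_H(q,dist_H)$ only shrinks and the final ball is contained in every earlier one. Second, the set of indices probed so far is always a contiguous integer interval $[a,b]$ with $i_q\in[a,b]$ (it starts as $\{i_q\}$, and each iteration appends the index $b+1$, the head of $\mathit{ProbingListTop}$, or prepends $a-1$, the head of $\mathit{ProbingListBottom}$); moreover a non-positive index adds nothing to the covered region, so \decisionProbeBandMaxRadius never selects it and $a\ge 1$ holds throughout. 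Now \decisionProbeBand$(q,n_H,w,j)$ is exactly the test ``partition $j$ meets $\B_H(q,dist_H)$''; since the ball is connected and meets $B[i_q]$ with $i_q\in[a,b]$, it fails to be contained in $\bigcup_{j=a}^{b}B[j]$ if and only if it meets $B[a-1]$ or $B[b+1]$, which is exactly the while-loop condition. So the loop runs precisely while the ball is not yet covered by the probed union and stops as soon as it is; at termination the probed interval $[a,b]$ therefore equals the range of partition indices that meet the final ball.

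Next I would compute that range. The hypothesis $d_H(0,q)\le d_H(q,n_H)$ says $0\in\B_H(q,d_H(q,n_H))$. A hyperbolic ball in $\H_r$ is a Euclidean ball, so this ball is a connected set containing the origin and containing $x=\argmax_{z\in \B_H(q,d_H(q,n_H))}\|z\|$; hence the set of Euclidean norms realized in it is the full interval $[0,\|x\|]$. Every data point satisfies $\tfrac1{1-\|z\|^2}>1=w^0$ (and the origin is put in partition $1$ by convention), so the partitions meeting the ball are exactly those with index in $\{1,\dots,b_1\}$, where $b_1=\lc\frac{-\log(1-\|x\|^2)}{\log w}\rc$. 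By the reduction these are exactly the probed partitions, and there are $b_1$ of them. (This matches the algorithm directly: it probes $B[i_q]$, the $i_q-1$ partitions $i_q-1,\dots,1$, and the $b_1-i_q$ partitions $i_q+1,\dots,b_1$.)

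The step I expect to be delicate is showing that the \emph{upper} end of the probed interval is exactly $b_1$ and not larger: because $dist_H$ shrinks during the search, one must rule out the algorithm reaching some index $>b_1$ while the ball is still large and then terminating with a now-smaller ball. I would handle this with the geometry of $\H_r$: a point in a partition whose index exceeds $i_q$ by a large amount has $1-\|z\|^2\ll 1-\|q\|^2$, forcing $d_H(q,z)$ to be large (the same $\arccosh$ estimate used in Lemma~\ref{lem:R_L_high}), so probing such a partition never improves $n_H$ and never re-expands the ball; combined with the non-increase of $dist_H$, the maximal index meeting the ball is already $b_1$ right after the first probe, so \decisionProbeBandNS correctly fails at $b_1+1$, and (since $a=1$ forces the lower frontier to the non-positive index $0$) the loop closes with $b=b_1$. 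The remaining details — that the lower frontier stops exactly at $1$ and that empty partitions encountered along the way are harmless — follow immediately from $a\ge1$ and the monotonicity already noted.
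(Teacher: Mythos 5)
Your overall skeleton is sound, and it is worth noting the paper itself states this lemma without proof, so there is nothing to compare against except correctness. The first half of your reduction is fine: $dist_H$ is non-increasing, the probed indices form a contiguous interval $[a,b]\ni i_q$, \decisionProbeBandNS tests intersection with the current ball, and at termination (under the hypothesis $d_H(0,q)\le d_H(q,n_H)$, which forces $a=1$ and makes every index in $\{1,\dots,b_1\}$ meet the final ball) the probed interval contains all indices meeting the final ball. That gives the lower bound: at least $b_1$ partitions are probed.

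The gap is in your treatment of the delicate direction (no over-probing on top). Your argument asserts that probing a partition whose index exceeds $i_q$ by a large amount ``never improves $n_H$'' and hence that ``the maximal index meeting the ball is already $b_1$ right after the first probe.'' Neither claim is justified, and both are false in general: the initial candidate from $B[i_q]$ can be hyperbolically very far (e.g.\ on the far side of the disk), so a point in a much higher partition can strictly improve it; and every improvement shrinks the radius and hence can strictly lower the largest index meeting the current ball, so the ceiling just after the first probe can exceed the final $b_1$. What actually rules out probing above $b_1$ is a combination of the hypothesis and the specific greedy rule in \decisionProbeBandMaxRadiusNS, which your proof never invokes. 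Since the final ball contains $0$, so does every intermediate ball; hence the bottom frontier keeps passing \decisionProbeBandNS until partition $1$ has been probed, and \decisionProbeBandMaxRadiusNS extends the top in that phase only when the outer covered radius is at most the inner covered radius, which is at most $d_H(0,q)\le d_H(q,n_H)$; consequently the inner shell of any top band probed before the bottom list is exhausted lies within the final radius of $q$, so that band meets the final ball. After the bottom is exhausted, any improving point found comes from the band currently probed or a higher one, so the current ball always continues to meet every top band already probed, and the top frontier stops exactly at $b_1+1$. Without an argument of this kind the statement would genuinely fail: a variant that probed the top side first could visit bands above the final $b_1$ while the ball is still large and then shrink it via an improvement found in a low band, yielding strictly more than $b_1$ probes. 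You correctly located the delicate step, but the geometric estimate you propose does not close it; the greedy balancing property (plus the containment of the origin in every intermediate ball) is the missing ingredient.
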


\bandedAlg generalizes to return $K$ nearest neighbors with the worst case approximation guarantee for each neighbor if $\decisionProbeBandMaxRadius$ and $\decisionProbeBand$ use the distance of the $K$-th best nearest neighbor found so far.

One can design variants of \bandedAlg that differ in the probing sequence and probing criteria. We explore a randomized probing order in the appendix. This scheme uses a $(1+\epsilon, R)$-approximate Euclidean Near Neighbor Decision Oracle that gives a Yes/No answer for whether there is an element within distance $R$ to any point. We only probe a partition if the Decision Oracle says there is definitely a nearer neighbor in that partition than the current best. 

We show this variant has the same $\sqrt{w}(1+\eps)$-approximation guarantee as in \bandedAlgNS, and will fully search (using $\tilde{O}$) $\log(B)$ partitions in expectation, though the Decision Oracle could be applied to all partitions. 

Lastly, we show in the appendix that even with an exact Euclidean oracle $\O$, \bandedAlg is not guaranteed to return an exact hyperbolic nearest neighbor. 

\section{Evaluation}
We compare the techniques we develop to existing solutions. \cite{krauthgamer2006algorithms} presents an idea for hyperbolic nearest neighbor search but omits key implementation details (and we were unable to extract an efficient implementation from their proof). To our knowledge, the only other practical algorithms for this problem are nearest neighbor graph methods \cite{subramanya2019rand} \cite{malkov2018efficient} \cite{fu2019fast}, where the graph is constructed using hyperbolic distance. We compare the effectiveness of our technique against Vamana, a graph method that exhibited superior performance against the other in-class methods in the evaluation in \cite{subramanya2019rand}. As this family of algorithms does not come with any guarantee on the search quality, our experiments use a fixed sampling budget and compare the nearest neighbor found by the different algorithms under this budget. For the algorithms developed in this paper, if during the search the algorithm terminates before hitting this budget, we stop early. For the graph-based method, if the graph search terminates before hitting the budget, we initialize another round of search by starting at a different random initial point and search until we hit the budget.

We use a low-dimensional and a high dimensional dataset. Our queries are points that we withheld from the dataset. We solve the $K$-nearest neighbor ($K$-NN) problem for $K = 1,5$. We report the average recall for our batch of queries, defined as \# of the $K$ true nearest neighbors found / K. We also report the average approximation ratios and the max approximation ratio, where for $K > 1$, the ratio is computed pointwise: $d_H(q, n_k) / d_H(q, n^*_k)$. for each $k \in [K]$. For Vamana, we experimented with a range of hyperparameters and report the most favorable results. Our results largely show that our simple algorithms perform very well against Vamana. We find more exact nearest neighbors and we report better approximation ratios on average when we do not find the exact nearest neighbor. We also report the CPU running time of each our algorithms. 

\subsection{Low dimensional hyperbolic embeddings}
For the low-dimensional regime, we embed into 10 dimensions a dataset of 82,115 words from the WordNet noun hierarchy using the source code in \cite{nickel2017poincare}. As a sanity check, our trained embeddings achieve a rank of 4.739 and a MAP score of 0.811 in the reconstruction evaluation criteria as described in \cite{nickel2017poincare}, which is close to their reported results. 

Since we have exact oracles in this regime, we first consider whether it is efficient to use \recenteringAlg for real-world datasets by evaluating the number of calls to $\O$. We use a standard kd-tree\footnote{Source code for the kd-tree can be found at https://github.com/stefankoegl/kdtree.} as the underlying Euclidean oracle. To further optimize, we make the minor modification to the classic kd-tree -- whenever the algorithm solves for the Euclidean distance between a data point and $q$, we also solve for the hyperbolic distance. The traversal and termination criteria are all based on Euclidean distance; our modification also keeps track of the closest hyperbolic neighbor seen so far and returns that point. The analysis that we develop in this paper assuming a black box oracle still holds in this modified setting. 

In 2 independent trials, we withhold 800 queries from the dataset and record the number of calls to $\O$ that \recenteringAlg uses to find the exact nearest neighbor. We see that the number is low (Table \ref{table_kd_recenter}). Therefore, we use \recenteringAlg in our subsequent experiments to evaluate against Vamana. 

\begin{table}[h]
\caption{Statistics of number of calls to $\O$ in \recenteringAlg for sets of 800 queries}
\label{table_kd_recenter}
\centering
\begin{tabular}{ |p{0.8cm}||p{3cm}|p{0.8cm}|p{0.8cm}|p{0.8cm}| }
 \hline
Trial & Average \# of calls to $\O$ & SD &Min &Max \\
 \hline
 1   & 2.36    &0.51&   2 &4\\
 2 &   2.3  & 0.49   &2 & 4\\
 \hline
\end{tabular}
\end{table}

To compare against Vamana, we withhold 50 queries from the dataset. We report the results for the $1$-NN problem in Table~\ref{table:kd_tree_vs_graph}, and $5$-NN problem in Table ~\ref{table:kd_tree_vs_graph_5_NN}. We report a second trial with the same experimental setup in Tables \ref{table:kd_tree_vs_graph_2_trial} and \ref{table:kd_tree_vs_graph_5_NN_trial_2} We vary the budget of datapoints that the algorithm is able to search: 100, 500, 1000. After some hyperparameter tuning for Vamana, we use $L = 10$ and $R = 10$ and $\alpha = 1.5$, see \cite{subramanya2019rand} for more details on these hyperparameters. Our results show that \recenteringAlg with kd-tree generally finds more exact nearest neighbors than Vamana and approximate near neighbors with lower approximation ratios.  For $K > 1$, we use the KD tree to return $K$ nearest Euclidean neighbors, and we first recenter based on the nearest neighbor. When that termination criteria is hit, then we recenter based on the $K$-th nearest neighbor.

\begin{table}[h]
\caption{Trial 1. \recenteringAlg vs Vamana for $1$-NN search in the 10-dimensional noun hierarchy dataset} 
\centering
\scalebox{0.8}{
\begin{tabular}{|c|ccc|ccc|} 
\hline
& \multicolumn{3}{c|} {\recenteringAlg} & \multicolumn{3}{c|}{Vamana} \\
\hline
\# Samples & Recall & Avg Ratio & Avg Max & Recall & Avg Ratio & Avg Max\\
\hline 
100 &0.46&1.10 &1.66 &0.46 & 1.21  & 2.61 \\
 \hline 
500 & 0.7&1.036  & 1.37 &0.52 & 1.18 & 2.61 \\
 \hline 
1000 &0.84& 1.017& 1.37  &0.52 & 1.19 & 2.61 \\
\hline 
\end{tabular}}
\label{table:kd_tree_vs_graph}
\end{table}

\begin{table}[h]
\caption{Trial 2.  \recenteringAlg vs Vamana for $1$-NN search in the 10-dimensional noun hierarchy dataset}
\centering
\scalebox{0.8}{
\begin{tabular}{|c|ccc|ccc|}
\hline
& \multicolumn{3}{c|} {\recenteringAlg} & \multicolumn{3}{c|}{Vamana} \\
\hline
\# Samples & Recall & Avg Ratio & Avg Max & Recall & Avg Ratio & Avg Max\\
\hline 
100 &0.56&1.10 &1.69  &0.6 & 1.38 & 6.42  \\
 \hline 
500 & 0.78&1.035 & 1.59 &0.6 & 1.372 & 6.42 \\
 \hline 
1000 & 0.9 & 1.018& 1.27 &0.64& 1.29 & 6.40 \\
 
\hline 
\end{tabular}
}
\label{table:kd_tree_vs_graph_2_trial}
\end{table}

\begin{table}[h]
\caption{Trial 1. \recenteringAlg vs Vamana for $5$-NN search in the 10-dimensional noun hierarchy dataset} 
\centering 
\scalebox{0.8}{
\begin{tabular}{|c|ccc|ccc|} 
\hline 
& \multicolumn{3}{c|} {\recenteringAlg} & \multicolumn{3}{c|}{Vamana} \\
\hline
\# Samples & Recall & Avg Ratio & Avg Max & Recall & Avg Ratio & Avg Max\\
\hline 
100 & 0.23 & 1.15&1.24 & 0.420 &1.132  &1.24 \\
 \hline 
500 &0.48 &1.07 &1.12 &0.424 & 1.12 &1.23 \\
 \hline 
1000 & 0.59&1.04 &1.09 &0.452 &1.10  &1.20 \\
\hline 
\end{tabular}}
\label{table:kd_tree_vs_graph_5_NN}
\end{table}

\begin{table}[h]
\caption{Trial 2. \recenteringAlg vs Vamana for $5$-NN search in the 10-dimensional noun hierarchy dataset} 
\centering 
\scalebox{0.8}{
\begin{tabular}{|c|ccc|ccc|} 
\hline
& \multicolumn{3}{c|} {\recenteringAlg} & \multicolumn{3}{c|}{Vamana} \\
\hline
\# Samples & Recall & Avg Ratio & Avg Max & Recall & Avg Ratio & Avg Max\\
\hline 
100 &0.312 &1.173 &1.25 & 0.576  &1.21 &1.39\\
 \hline 
500 &0.584 & 1.07&1.12 &0.596  &1.20 &1.38\\
 \hline 
1000 &0.687 &1.044 &1.07 &0.576  &1.20 &1.39\\
\hline 
\end{tabular}}
\label{table:kd_tree_vs_graph_5_NN_trial_2}
\end{table}

\subsection{Approximate nearest neighbors for high dimensional hyperbolic embeddings}
For the high dimensional regime, we use provided embeddings from \cite{de2018representation} constructed using a higher dimensional extension of Sarkar's embedding algorithm \cite{sarkar2011low}. We use a dataset of 63,000 embeddings in 100 dimensions from the WordNet Hypernym noun hierarchy. Our Euclidean approximate nearest neighbor algorithm is the random hyperplane based scheme in \cite{datar2004locality}. We draw random hyperplanes uniformly from the unit sphere. For a random normal hyperplane $r$, the hash value of an element $x$ is $\frac{r \cdot x}{g}$, where $g$ is a granularity constant that determines how many equi-width segments we want to split the line segment $(-1, 1)$ into. As described in \cite{datar2004locality}, points that are close together tend to fall into the same segment. 

We use \bandedAlg with width $w = 3$, and 25 bands for extra tolerance. Each band $i$ containing normalized elements $x$ such that $3^{i-1} \leq \frac{1}{1-\|x\|^2} \leq 3^i$ is organized into an LSH data structure that uses 5 tables, with 15 random normalized hyperplanes per table, and with granularity $g=\min\{3^i, 10000\}$. We choose granularities based on data characteristics; in locality sensitive hashing, bucket widths are proportional to typical Euclidean nearest neighbor distances scaled by an appropriate function of the dimension and the number of random hyperplanes. The hyperplanes used for the LSH tables of each partition are the same. We probe buckets within distance 1 of the query bucket. 

Tables \ref{table:lsh_graph_experiments} and \ref{table:lsh_graph_experiments_5_NN} give the results for 49 queries withheld for the $1$-KNN and the $5$-KNN problems respectively. Tables \ref{table:lsh_graph_experiments_trial_2} and \ref{table:lsh_graph_experiments_5_NN_trial_2} give the results for a second trial of 38 queries withheld. 

After tuning for Vamana, we use $L = 40, R = 20, \alpha = 1.5$. Our results show that with \bandedAlg with LSH generally finds much more exact nearest neighbors than Vamana. Interestingly, for this type of very structured Sarkar embeddings, we outperform Vamana by a larger margin than for the trained embeddings in the previous experiment. 

\begin{table}[h]
\centering
\caption{Trial 1. \bandedAlg vs Vamana for $1$-NN in the 100-dimensional noun hierarchy dataset} 
\scalebox{0.8}{
\centering 
\begin{tabular}{|c|ccc|ccc|} 
\hline 
& \multicolumn{3}{c|} {\bandedAlg} & \multicolumn{3}{c|}{Vamana} \\
\hline
\# Samples & Recall & Avg Ratio & Avg Max & Recall & Avg Ratio& Avg Max \\
\hline 
100 & 0.43 & 2.01 &8.99 &0.04 & 3.66  & 8.68  \\
 \hline 
500 &0.71 &1.39 & 5.91 &0.18 & 2.001  & 4.49 \\
 \hline 
1000 & 0.90& 1.053&1.81 &0.39 & 1.52 & 3.50 \\
\hline
\end{tabular}}
\label{table:lsh_graph_experiments}
\end{table}

\begin{table}[h]
\caption{Trial 2. \bandedAlg vs Vamana for $1$-NN in the 100-dimensional noun hierarchy dataset}
\centering 
\scalebox{0.8}{
\begin{tabular}{|c|ccc|ccc|}
\hline 
& \multicolumn{3}{c|} {\bandedAlg} & \multicolumn{3}{c|}{Vamana} \\
\hline
\# Samples & Recall & Avg Ratio & Avg Max & Recall & Avg Ratio& Avg Max \\
\hline 
100 &0.58 &1.63 & 6.05 &0.05 & 3.413 & 8.53  \\
 \hline 
500 &0.71 &1.28 & 4.74 &0.16 & 1.798& 5.06 \\
 \hline 
1000 &0.92 &1.08 &3.68&0.32 & 1.55 & 3.98 \\
\hline
\end{tabular}
}
\label{table:lsh_graph_experiments_trial_2}
\end{table}

\begin{table}[h]
\centering
\caption{Trial 1. \bandedAlg vs Vamana for $5$-NN in the 100-dimensional noun hierarchy dataset}
\scalebox{0.8}{
\centering
\begin{tabular}{|c|ccc|ccc|}
\hline 
& \multicolumn{3}{c|} {\bandedAlg} & \multicolumn{3}{c|}{Vamana} \\
\hline
\# Samples & Recall & Avg Ratio & Avg Max & Recall & Avg Ratio& Avg Max \\
\hline
100 & 0.32 &1.57 &2.19 &0.016 &2.76  & 3.76  \\
 \hline 
500 &0.65 & 1.20& 1.41 & 0.09&1.73  & 2.35  \\
 \hline 
1000 &0.81 &1.052 & 1.12&0.187 & 1.40 & 1.85  \\
\hline 
\end{tabular}}
\label{table:lsh_graph_experiments_5_NN}
\end{table}

\begin{table}[h]
\centering
\caption{Trial 2. \bandedAlg vs Vamana for $5$-NN in the 100-dimensional noun hierarchy dataset}
\scalebox{0.8}{
\centering
\begin{tabular}{|c|ccc|ccc|}
\hline
& \multicolumn{3}{c|} {\bandedAlg} & \multicolumn{3}{c|}{Vamana} \\
\hline
\# Samples & Recall & Avg Ratio & Avg Max & Recall & Avg Ratio& Avg Max \\
\hline
100 & 0.410&1.40& 1.75&0.063 & 2.79 &3.66 \\
 \hline 
500 &0.668 &1.22&1.36 &0.147 & 1.65 & 2.13 \\
 \hline
1000 &0.784 &1.14&1.244 &0.236 & 1.521 & 1.97 \\
\hline
\end{tabular}}
\label{table:lsh_graph_experiments_5_NN_trial_2}
\end{table}

\subsection{Running Time}
We report the running time ratio for the $5$-KNN problem in Table \ref{table:running_time_5nn}, where the ratio is defined as the time for our techniques / Vamana's running time (so lower is better). Overall, our methods are faster than Vamana. This difference is likely because our algorithms have termination criteria that may not exhaust the given sampling budget, and so we stop early, whereas for the graph based Vamana, we maximize the budget. In the latter case, we do so because Vamana (and other in-class graph algorithms \cite{malkov2018efficient} \cite{fu2019fast}) perform better when the graph is searched multiple times using different initial points (even so, there are no theoretical guarantees).
\begin{table}[h]
\caption{Running time Ratios}
\centering
\scalebox{1}{
\begin{tabular}{|c|ccc|ccc|}
\hline
& \multicolumn{3}{|c|} {Low dimensional} & \multicolumn{3}{c|}{High dimensional} \\
 \hline
\# samples&100& 500 &1000 & 100& 500 &1000\\
\hline
Ratio &0.07 & 0.03 & 0.02 & 0.017 & 0.006 &0.0018 \\
\hline
\end{tabular}}
\label{table:running_time_5nn}
\end{table}

\section{Conclusion}
We consider the problem of nearest neighbor search for hyperbolic embeddings. We give theoretical guarantees and hardness results for our techniques. Experimental validation shows the effectiveness of our techniques against baseline methods. 

\appendix
\section{Details for \recenterBallsNS}
We now provide the helper routine to recenter the hyperbolic ball to its Euclidean center, \recenterBallsNS. The reasoning that \recenterBalls will return the Euclidean center of the hyperbolic ball is as follows: 
\begin{itemize}
\item Hyperbolic distance is additive on the line. 
\item $t_1c_H, t_2c_H$ and $c_H$ are collinear. Moreover, $d_H(t_1c_H, c_H) = d_H(t_2c_H, c_H) = r$ and therefore, $d_H(t_1c_H, t_2c_H) = 2r$ and so $t_1c_H$ and $t_2c_H$ are points on the sphere whose distance achieves the largest possible according to the hyperbolic metric, and so they form the endpoints of a line segment that passes through the center of the Euclidean circle. So we can take their average to find the center. 
\end{itemize}

\begin{algorithm}[h]
\caption{\recenterBalls}
\begin{algorithmic}[1]
\REQUIRE{hyperbolic center $c_H$, radius of hyperbolic ball $r$}
\IF {$c_H = \vec{0}$}
\STATE \textbf{Return} $\vec{0}$
\ENDIF
\STATE Find scalar $t_1$ such that $t_1 \|c_H\|_2 = \tanh \left( \frac{d_H(0, c_H) + r}{2} \right)$
\STATE Find scalar $t_2$ such that $t_2 \|c_H\|_2 = \tanh \left( \frac{d_H(0, c_H) - r}{2} \right)$
\STATE \textbf{Return} $\frac{t_1c_H + t_2c_H}{2}$
\end{algorithmic}
\end{algorithm}

\subsubsection{Best Case Configuration for \recenteringAlg}
A best case configuration is the following. The query, $q$, is point $(0, 0.99)$. Suppose now that the true hyperbolic nearest neighbor, $n_H$, is at point $(0,.981)$ and there is another point $n_E$, at $(0, 0.998)$, which is the Euclidean nearest neighbor to $q$. 

At the first iteration, the Euclidean nearest neighbor oracle $\O$ returns $n_E$. Then the hyperbolic circle radius is: 
\[d_H(q, n_E) = \arccosh \left(1 + \frac{2 \|q-n_E\|^2}{(1-\|q\|^2)(1-\|n_E\|^2)} \right)\]

The other boundary of the hyperbolic ball in the direction of the query $q$, denoted $n_B$ is a point of the form $(0, b)$. We solve for $b$ by noticing that $n_B$ satisfies: 
\begin{align*} d_H(q, n_B) &= \arccosh \left(1 + \frac{2 \|q-n_B\|^2}{(1-\|q\|^2)(1-\|n_B\|^2)} \right) \\
& = \arccosh \left(1 + \frac{2 (0.99 - b)^2}{(1-(0.99)^2)(1-b^2)} \right)
\end{align*}
Equating the expression to $d_H(q, n_E)$ gives us that $b \approx .912252$. 
\\\\
Therefore, the Euclidean center of this hyperbolic circle, denoted $q_{new}$, is $(0, 0.9551260)$. 
\\\\
Now suppose additionally we have $k-2$ points on the $y$-axis between $.912252$ and $.928$, for arbitrary $k$. Clearly then $n_E$ is the $k$-th hyperbolic nearest neighbor of $q$ but \recenteringAlg will return the hyperbolic nearest neighbor in  3 rounds. 

\section{Integration with approximate Euclidean nearest neighbor oracles}
\begin{lemma}
For any $\eps > 0$, \binarySearchAlg using a $(1+\eps)$-approximate Euclidean nearest neighbor oracle $\tO$ can return an approximate hyperbolic nearest neighbor with an arbitrarily bad approximation ratio. 
\end{lemma}

\begin{proof}
 As before, let 
$$n_E = (0, 1-\gamma)$$
$$q = \left(0, 1- \left(\frac{\gamma + \delta}{2}\right)\right)$$
$$n_H = (0, 1-\delta)$$
Suppose that $\frac{d_H(q, n_E)}{d_H(q, n_H)} = S$ for some very high $S$. Then we want to show that if $\delta$ is sufficiently high, \binarySearchAlg will return $n_E$ and fail to find $n_H$, leading to a bad approximation ratio of $S$.

Clearly, $RL = S(d_H(q, n_H))^2$ in this case, so $\sqrt{RL} = \sqrt{S}d_H(q, n_H)$. We want to find $T_1 = (0, t_1)$ and $T_2 = (0, t_2)$ such that $d_H(q, T_1) = d_H(q, T_2) = \sqrt{RL}$. \binarySearchAlg will call the $(1+\eps)$-Euclidean oracle to find the nearest neighbor of $n_c = \frac{T_1 + T_2}{2}$. 

For clarity, let's say that $q = (0, y), n_E = (0, y+r), n_H = (0, y-r)$, where $y>0, r>0$. 
For \binarySearchAlg to fail, the condition we want is: 
\[
y - r > \frac{t_1 + t_2}{2} + \frac{d_E(n_E, n_c)}{1+\eps} = \frac{t_1 + t_2}{2} + \frac{y + r - \frac{t_1 + t_2}{2}}{1+\eps}
\]

This condition is equivalent to: 
\[
\frac{t_1 + t_2}{2} < \frac{\eps(y-r)-2r}{\eps}
\]

Let $D = \sqrt{RL}$. One can calculate that 
\[
t_1 = \frac{\sinh\left(\frac{D}{2}\right) - y \cosh \left( \frac{D}{2}\right)}{y\sinh\left(\frac{D}{2}\right) - \cosh \left( \frac{D}{2}\right)}
\]
\[
t_2 = \frac{\sinh\left(\frac{D}{2}\right) + y \cosh \left( \frac{D}{2}\right)}{y\sinh\left(\frac{D}{2}\right) + \cosh \left( \frac{D}{2}\right)}
\]

Therefore, we have: 
\begin{align*}
\frac{t_1 + t_2}{2} &= \frac{y \left( \sinh^2 \left( \frac{D}{2}\right) - \cosh^2 \left( \frac{D}{2} \right) \right)}{y^2 \sinh^2 \left( \frac{D}{2}\right) - \cosh^2 \left( \frac{D}{2}\right)} \\
&= \frac{y}{\cosh^2 \left( \frac{D}{2}\right) - y^2 \sinh^2 \left( \frac{D}{2}\right)} \\
& = \frac{y}{\cosh^2 \left( \frac{D}{2}\right) - y^2 \left( \cosh^2 \left( \frac{D}{2}\right) - 1\right)} \\
& = \frac{y}{(1-y^2)\cosh^2 \left( \frac{D}{2}\right) + y^2 } \\
& = \frac{2y}{(1-y^2)\left( 1+ \cosh(D)\right) + 2y^2 } \\
& = \frac{2y}{1-y^2 + \cosh(D)(1-y^2) + 2y^2} \\
& = \frac{2y}{1+y^2 + \cosh(D)(1-y^2)} \\
& \leq \frac{2y}{1+y^2 + (1-y^2)\frac{e^D}{2}} \\
& \leq \frac{2y}{(1-y^2)(1 + \frac{e^D}{2})} \\
& \leq \frac{4y}{(1-y^2)(e^D)}
\end{align*}

Note that $D = \sqrt{RL} = \sqrt{S} d_H(q, n_H)$. 

Remember that we have: 
\begin{align*}
d_H(n_H, q) &= \arccosh \left(1+ \frac{2 \left(\frac{\delta - \gamma}{2}\right)^2}{\left(2\delta - \delta^2\right)\left(\gamma + \delta - \left(\frac{\gamma + \delta}{2}\right)^2\right)} \right) \\
& \geq \arccosh \left(1+ \frac{2 \left(\frac{\delta - \gamma}{2}\right)^2}{\left(2\delta\right)\left(\gamma + \delta\right)} \right) \\
& \geq \arccosh \left( 1 + \frac{1}{\delta} \cdot \frac{\delta}{8}\right) \\
& \geq \arccosh \left( 1 + \frac{1}{8} \right) \geq 0.49
\end{align*}
where we again use that $\gamma$ is sufficiently small that $\frac{\left( \delta - 2\gamma\right)}{\delta +\gamma} \geq \frac{1}{2}$. 

This implies that $D \geq 0.49\sqrt{S}$, so $e^D \geq e^{ 0.49\sqrt{S}}$. 

So we want:
\[
\frac{4y}{(1-y^2)(e^D)} \leq \frac{\eps(y-r)-2r}{\eps} = y-r - \frac{2}{\eps}r
\]
This is equivalent to: 
\[
r\left(1+ \frac{2}{\eps}\right) \leq y\left(1-\frac{4}{(1-y^2)(e^D)}\right)
\]

Remember that $r = \frac{\delta - \gamma}{2} < \frac{\delta}{2}$, so we have: 
\[
r\left(1+ \frac{2}{\eps}\right) \leq \frac{3\delta}{2\eps}
\]

Now to focus on the right hand side, if we have: 
\[
\frac{4}{e^D} < \frac{1}{2} (1-y^2) ~, 
\]

then we have
\[
y\left(1-\frac{4}{(1-y^2)(e^D)}\right) \geq \frac{y}{2}
\]

Also we can say that $y =  1- \left(\frac{\gamma + \delta}{2}\right) > \frac{1}{2}$, so that $y\left(1-\frac{4}{(1-y^2)(e^D)}\right) > \frac{1}{4}$. 

Then for a given $(1+\eps)$-approximate Euclidean oracle, as long as $\delta$ is small enough that $\frac{3\delta}{2\eps} \leq \frac{1}{4}$ or $\delta < \frac{\eps}{6}$, then \binarySearchAlg will fail. 

Now to see how to satisfy the constraint that $\frac{4}{e^D} < \frac{1}{2} (1-y^2)$. 

Note that 
\begin{align*}
\frac{1}{2} (1-y^2) &= \frac{1}{2} \left( \delta + \gamma - \left( \frac{\gamma + \delta}{2}\right)^2\right) \\
&\geq \frac{1}{2}\left( \frac{\gamma + \delta}{2}\right)^2 \\
& \geq \frac{\delta^2}{8}
\end{align*}

From before, we had that $\frac{4}{e^D} \leq \frac{4}{e^{ 0.49\sqrt{S}}}$. 

Then a sufficient condition is that $S$ is large enough that $\frac{4}{e^{ 0.49\sqrt{S}}} \leq \frac{\delta^2}{8}$. 
\end{proof}

\section{\bandedAlg}

\subsection{Details for \bandingDatasetAlgNS}
We first describe the partitioning algorithm to divide the dataset into bands based on Euclidean norm. \bandingDatasetAlgNS, Algorithm \ref{alg:nn_make_bands} is the formal pre-processing algorithm to divide the dataset. The algorithm works by taking in the largest possible norm that one wishes to support; for a given dataset, this could be the norm of the largest data point or a norm slightly higher than that for extra tolerance, as well as the multiplicative width of each annulus $w$, for $(w > 1)$. The width $w$ controls the granularity at which we divide the dataset based on $1 - \|x\|^2$. The $i$-th annulus, or partition, contains all data points $x$ such that $w^{i-1} \leq \frac{1}{1-\|x\|^2} \leq w^{i}$. 
\begin{algorithm}[ht]
\caption{\bandingDatasetAlg}
\label{alg:nn_make_bands}
\begin{algorithmic}[1]
\REQUIRE{dataset $\D$, multiplicative width of annulus, $w$, largest possible norm to support, $L$}
\STATE \text{num\_bands} $\leftarrow \lc\frac{-\log (1-\|L\|^2)}{\log w}\rc$
\STATE \text{Initialize (num\_bands -1) partitions to organize datasets into, denote $B[i]$ as the $i$-th partition}. 
\FORALL{$x \in \D $}
\STATE $i = \lc\frac{-\log (1-\|x\|^2)}{\log w}\rc$
\STATE \text{Insert $x$ into $B[i]$}
\ENDFOR
\STATE \textbf{Return} $B$
\end{algorithmic}
\end{algorithm}

\subsection{Details for \decisionProbeBand}

We now describe the helper routine for \bandedAlg that determines whether to probe a band (Algorithm \ref{alg:probe_bucket}). The idea behind \decisionProbeBand is very simple. It takes in the center, and a point on the intended hyperbolic ball, which in our case is the query $q$ and the current best nearest neighbor, $n_H$, respectively, as well as the multiplicative width of the buckets and the bucket index to evaluate. The point $x$ with the largest possible Euclidean norm of any of the points in this ball satisfies $d_H(x, 0) = d_H(q, n_H) + d_H(0, q)$. Moreover, if $x$ were of the form $t_1c_H$ for some scalar $t_1$, since hyperbolic distance is additive on the line, we also satisfy that $t_1c_H$ is on the boundary of the ball. Therefore, we just have to solve for this $t_1$ and calculate the bucket index $j$ that $t_1c_H$ would ordinarily partition to. If the bucket index under consideration $b$ is greater than $i$ (the bucket index that the query partitions to), we should probe $b$ if $b < j$. If $b < i$, then we do the same calculation but for the reverse situation where we analyze the smallest possible Euclidean norm of any point in the hyperbolic ball. One small difference is that the origin might be contained in this ball, in which case the $t_2$ might be negative. In that case, we should search all buckets with indices smaller than $i$. 

\begin{algorithm}[ht]
\caption{\decisionProbeBand}
\label{alg:probe_bucket}
\begin{algorithmic}[1]
\REQUIRE{hyperbolic center $c_H$, point on the boundary of hyperbolic ball $p$, multiplicative width of annulus $w$, bucket index to evaluate $b$}
\STATE $i \leftarrow \lc\frac{-\log (1-\|c_H\|^2)}{\log w}\rc$
\IF {$p = NULL$}
\STATE \textbf{Return} \text{True}
\ELSIF {$b \geq i$}
\STATE Find scalar $t_1$ such that $t_1 \|c_H\|_2 = \tanh \left( \frac{d_H(0, c_H) + d_H(c_H, p)}{2} \right)$
\STATE $j \leftarrow \lc\frac{-\log(1-\|t_1c_H\|^2)}{\log(w)}\rc$
\IF {$b \leq j$}
\STATE \textbf{Return} \text{True}
\ENDIF
\ELSE
\STATE Find scalar $t_2$ such that $t_2 \|c_H\|_2 = \tanh \left( \frac{d_H(0, c_H) - d_H(c_H, p)}{2} \right)$
\IF {$t_2 \leq 0$}
\STATE \textbf{Return} \text{True}
\ELSE
\STATE $j \leftarrow \lf\frac{-\log(1-\|t_2c_H\|^2)}{\log(w)}\rf$
\IF {$b \geq j$}
\STATE \textbf{Return} \text{True}
\ENDIF
\ENDIF
\ENDIF
\STATE \textbf{Return} \text{False}
\end{algorithmic}
\end{algorithm}

\subsection{Details for \decisionProbeBandMaxRadius}
We describe the helper routine that decides whether the algorithm should search in band $b_1$ or $b_2$, when the algorithm is guaranteed to have already searched in bands $i, i+1 \ldots b_1 - 1$, and $i-1, i-2 \ldots b_2+1$, where $i$ is the band index that the query falls into. The overall idea is that when deciding which next band to probe, we choose the band which maximizes the radius of the hyperbolic ball around $q$ that is completely covered by the union of bands probed so far as well as the new band under consideration.

\begin{algorithm}[ht]
\caption{\decisionProbeBandMaxRadius}
\label{alg:choose_bucket}
\begin{algorithmic}[1]
\REQUIRE{hyperbolic center $c_H$, current best neighbor $n_H$, multiplicative width of annulus $w$, bucket index to evaluate $b_1$, $b_2$, wlog $b_1 > b_2$}
\STATE $d_1 \leftarrow -\infty$
\STATE $d_2 \leftarrow -\infty$
\IF {$\decisionProbeBand(c_H, n_H, w, b_1)$}
\STATE Find scalar $t_1$ such that $\frac{1}{1-\|t_1 c_H\|^2} = w^{b_1}$
\STATE Find scalar $t_2$ such that $\frac{1}{1-\|t_2 c_H\|^2} = w^{b_2}$
\STATE $d_1 \leftarrow \min \{ d_H(c_H, t_1c_H), d_H(c_H, t_2 c_H)\}$
\ENDIF

\IF {$\decisionProbeBand(c_H, n_H, w, b_2)$}
\STATE Find scalar $t_3$ such that $\frac{1}{1-\|t_3 c_H\|^2} = w^{b_1-1}$
\STATE Find scalar $t_4$ such that $\frac{1}{1-\|t_4 c_H\|^2} = w^{b_2-1}$
\STATE $d_2 \leftarrow \min \{ d_H(c_H, t_3c_H), d_H(c_H, t_4 c_H)\}$
\ENDIF
\IF {$d_1 \geq b_2$}
\STATE \textbf{Return} \text{$b_1$}
\ENDIF
\STATE \textbf{Return} \text{$b_2$}
\end{algorithmic}
\end{algorithm}

\subsection{\bandedRandAlg}
The probing strategy for \bandedAlg in the worst case (for large hyperbolic distances between $q$ and $n_H$) would probe many buckets, possibly all the buckets. To reduce the number of buckets probed, we introduce a randomized algorithm that orders the buckets uniformly at random among all possible permutations, and calls the Euclidean nearest neighbor oracle $\tO$ on the first bucket on the list to find a starting nearest neighbor candidate with hyperbolic radius $r$ to the query. On subsequent buckets, we first use a decision oracle to determine whether that bucket will definitely contain an element closer to $q$ than the current best. If the decision oracle says yes, then we do a full probe on that bucket. Otherwise we move onto the next bucket on the list. The advantage here is that a query to the decision oracle can be very fast, so if \bandedAlg would do a full probe on all the buckets, this randomized algorithm would in expectation do a full probe on a small number of buckets. However, this algorithm uses a decision oracle, which is not always available, or efficient. We first define the decision oracle. 

\begin{definition}[$(1+\eps, R)$-approximate Euclidean Near Neighbor Decision Oracle, $\tDO$]
The $(1+\eps, R)$-approximate Euclidean Near Neighbor Oracle, $\tDO$ takes as input a query $q$, radius of interest $R$, approximation factor $\eps > 0$, and a dataset of elements $\D$. If the Euclidean nearest neighbor to $q$, denoted $n_E$, satisfies $\|q-n_E\| \leq R$, this oracle returns a certificate element $x'$ such that $\|x' - q\| \leq (1+\eps) R$. 
\end{definition}

It is actually possible to build a $(1+\eps)$-approximate Euclidean nearest neighbor oracle by calling on the $(1+\eps, R)$-approximate Euclidean Near Neighbor Decision Oracle multiple times using successively smaller values of $R$ in a binary search fashion. The query times for the decision oracle are typically smaller than for the approximate near neighbor oracles (since we are not searching for the nearest, just for something nearer than $R$), the saving is about a factor logarithmic in $n$. 

\begin{algorithm}[!htp]
\caption{\bandedRandAlg}
\label{alg:nn_banded_rand_alg}
\begin{algorithmic}[1]
\REQUIRE{query $q$, $(1+\eps)$-approximate Euclidean NN oracle $\tO$, $(1+\eps, R)$-approximate decision oracle $\tDO$, buckets $B$ with width $w$}
\STATE $ProbingList \leftarrow Unif(B)$ \text{ // the list of buckets arranged in a random order}
\STATE $n_H \leftarrow \text{NULL}$ \text{ // current best nearest neighbor candidate}
\STATE $dist_H = \infty$ \text{ // hyperbolic distance of current best nearest neighbor candidate}
\FOR{buckets $b$ in ProbingList}
\STATE $R \leftarrow \sqrt{\frac{\left( \frac{\cosh(dist_H)-1}{2}\right) (1-\|q\|^2)}{w^i (1+\eps)^2}}$
\IF {$dist_H = \infty$ or $\tDO(q, R, B[b]) = YES$}
\STATE $n_F \leftarrow \tO(q, B[b])$
\IF {$d_H(q, n_F) < dist_H$}
\STATE $dist_H \leftarrow d_H(q, n_F)$
\STATE $n_H \leftarrow n_F$
\ENDIF
\ENDIF
\ENDFOR
\STATE \textbf{Return} $n_H$
\end{algorithmic}
\end{algorithm}

We first analyze the approximation guarantee of this \bandedRandAlg. Then we give the analysis for the expected number of full probes made by the approximate nearest neighbor oracle $\tO$. 
\begin{theorem}
Using a $(1+\eps)$-Euclidean nearest neighbor oracle $\tO$, a $(1+\eps, R)$-Euclidean near neighbor decision oracle and a dataset split with a multiplicative width of $w$, \bandedRandAlg returns a hyperbolic approximate nearest neighbor $n_H$ to any query $q$ such that $d_H(q, n_H) \leq \sqrt{w}(1+\eps) d_H(q, n^*)$. 
\end{theorem}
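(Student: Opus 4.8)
The plan is to reduce to the analysis already carried out for \bandedAlgNS. Let $n^*$ be the exact hyperbolic nearest neighbor, and let $j$ be the index of the bucket $B[j]$ into which \bandingDatasetAlgNS places $n^*$; since \bandedRandAlg iterates over a uniformly random permutation of \emph{all} buckets, $B[j]$ does appear in $ProbingList$. Write $n_H$ for the final output, and note that $dist_H$ is non-increasing over the course of the loop. I would split on whether $B[j]$ is ever fully probed, i.e.\ whether $\tO$ is invoked on it.

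\emph{Case 1: $B[j]$ is fully probed.} In the iteration where this happens, $\tO(q,B[j])$ returns some $n_F\in B[j]$ with $\|q-n_F\|\le(1+\eps)\|q-n_E^{(j)}\|\le(1+\eps)\|q-n^*\|$, where $n_E^{(j)}$ is the Euclidean nearest neighbor of $q$ inside $B[j]$ and the second inequality uses $n^*\in B[j]$. From here the argument is essentially verbatim the proof of the $\sqrt w(1+\eps)$ guarantee for \bandedAlgNS: bounding $d_H(q,n_F)$ from above using $1/(1-\|n_F\|^2)\le w^{j}$ and $d_H(q,n^*)$ from below using $1/(1-\|n^*\|^2)\ge w^{j-1}$ in \eqref{eq:poincare_distance}, then invoking the power-series inequality $\arccosh\!\big(1+\tfrac{\cosh D-1}{w(1+\eps)^2}\big)\ge \tfrac{D}{\sqrt w(1+\eps)}$ established there (with $D=\arccosh(1+B)$, $B=\tfrac{2\|q-n_F\|^2 w^{j}}{1-\|q\|^2}$) yields $d_H(q,n_H)\le d_H(q,n_F)\le\sqrt w(1+\eps)\,d_H(q,n^*)$.

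\emph{Case 2: $B[j]$ is never fully probed.} Then when the loop reached $b=j$ we had $dist_H<\infty$ and $\tDO(q,R,B[j])$ answered NO, where $R^2=\tfrac{(\cosh(dist_H)-1)(1-\|q\|^2)}{2w^{j}(1+\eps)^2}$ for the value of $dist_H$ at that iteration (reading the exponent $w^i$ in the algorithm's definition of $R$ as $w^{b}$ for the bucket $b$ then under consideration). By the contrapositive of the decision-oracle guarantee, the Euclidean nearest neighbor of $q$ in $B[j]$ — and hence $n^*$, since $n^*\in B[j]$ — lies at Euclidean distance strictly more than $R$ from $q$. Squaring and rearranging gives $\cosh(dist_H)<1+\tfrac{2\|q-n^*\|^2 w^{j}(1+\eps)^2}{1-\|q\|^2}$, and since $dist_H$ never increases afterward the same bound holds for the final $d_H(q,n_H)$. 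Setting $A=\tfrac{2\|q-n^*\|^2 w^{j-1}}{1-\|q\|^2}$ and $c=\sqrt w(1+\eps)$, this reads $\cosh(d_H(q,n_H))<1+Ac^2$, while $1/(1-\|n^*\|^2)\ge w^{j-1}$ gives $d_H(q,n^*)\ge\arccosh(1+A)$; applying the same power-series inequality with $D=\arccosh(1+Ac^2)$ gives $\arccosh(1+Ac^2)\le c\,\arccosh(1+A)$, so again $d_H(q,n_H)\le\sqrt w(1+\eps)\,d_H(q,n^*)$.

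The one genuinely new point, and the main obstacle, is Case 2: one must notice that skipping $B[j]$ is harmless because a NO from $\tDO$ is itself a certificate that $n^*$ is Euclidean-far from $q$, which together with monotonicity of $dist_H$ forces the current $dist_H$ to already be small enough. The radius $R$ in the algorithm is calibrated precisely so that this implication closes, and the only case-specific bookkeeping is to check that the factor $(1+\eps)^2$ in the definition of $R$ combines with the within-bucket norm spread $w$ to give exactly the same constant $c^2=w(1+\eps)^2$ that Case 1 loses through the approximate oracle and the norm spread — so both branches produce the advertised $\sqrt w(1+\eps)$.
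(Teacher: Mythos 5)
Your proof is correct and takes essentially the same route as the paper's: the paper also argues that a NO answer from $\tDO$ on the bucket containing $n^*$ forces $\|q-n^*\|$ above the calibrated radius $R$, then reduces to the $\arccosh$ power-series inequality from the \bandedAlgNS analysis. Your explicit two-case split (bucket probed vs.\ skipped) and the monotonicity of $dist_H$ are only spelled out more carefully than in the paper, which leaves the probed case to ``follows similarly.''
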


\begin{proof}

Suppose that the current best nearest neighbor candidate, $n_H$ has hyperbolic distance $D$ to the query. Further suppose we are looking at the $i$-th bucket. This bucket contains elements $y$ such that $\frac{1}{w^i} \leq 1-\|y\|^2 \leq \frac{1}{w^{i-1}}$. We want to ask this bucket if it contains an element $x$ such that $d_H(q, x) < D$. 

So we want:  
\[
d_H(q, x) = \arccosh \left(1 + \frac{2\|q-x\|^2}{(1-\|q\|^2)(1-\|x\|^2} \right) \leq D
\]
This implies that
\begin{align*}
\|q-x\|^2 &\leq \left( \frac{\cosh(D)-1}{2}\right) (1-\|q\|^2) (1-\|x\|^2) \\
&\leq \frac{\left( \frac{\cosh(D)-1}{2}\right) (1-\|q\|^2)}{w^i}
\end{align*}

So if bucket $i$ contains an element $x$ such that
\[
\|q-x\| \leq \sqrt{\frac{\left( \frac{\cosh(D)-1}{2}\right) (1-\|q\|^2)}{w^i}}
\]
then $x$ is definitely a nearer neighbor to $q$ than $n_H$. 

But since we are using a $(1+\eps)$-approximate nearest neighbor oracle, to guarantee that the oracle only returns an element if bucket $i$ is guaranteed to contain a nearer neighbor, we let $R = \sqrt{\frac{\left( \frac{\cosh(D)-1}{2}\right) (1-\|q\|^2)}{w^i (1+\eps)^2}}$.

Now to analyze the approximation factor. Some error could be introduced in the fact that the decision oracle could have said ``NO" but the bucket actually contained a closer element, but this closer element was just slightly closer to $q$ than $n_H$. Say that this happened and we just missed $n^*$. Then clearly, 
\[
\|q-n^*\| \geq \sqrt{\frac{\left( \frac{\cosh(D)-1}{2}\right) (1-\|q\|^2)}{w^i(1+\eps)^2}}
\]
Moreover, $\frac{1}{1-\|n^*\|} \geq w^{i-1}$

Therefore, 
\begin{align*}
d_H(q, n^*) &= \arccosh \left(1 + \frac{2\|q-n^*\|^2}{(1-\|q\|^2)(1-\|n^*\|^2} \right) \\
& \geq \arccosh \left(1 + \frac{2 w^{i-1} \|q-n^*\|^2}{(1-\|q\|^2)} \right) \\
& \geq \arccosh \left( 1 + \frac{\cosh(D) -1}{w(1+\eps)^2} \right)
\end{align*}

The rest of the proof follows similarly to the proof for \bandedAlgNS. 
\end{proof}

Now we want to provide an analysis on the expected number of invocations of the approximate nearest neighbor oracle $\tO$. We have the following theorem: 
\begin{lemma}
Suppose that there are $N$ buckets in total, and the probing order is selected uniformly at random among all the possible permutations of the $N$ buckets. Then the expected number of invocations of the approximate nearest neighbor oracle $\tO$ is $O(\ln N)$.
\end{lemma}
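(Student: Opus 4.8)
The plan is to reduce the count to the classical fact that a uniformly random permutation of $N$ distinct numbers has $H_N=\sum_{k=1}^{N}\tfrac1k=O(\ln N)$ left-to-right minima in expectation. Throughout I would assume (without real loss) that every bucket is non-empty — otherwise take $N$ to be the number of non-empty buckets, since an empty bucket produces at most a trivial call to $\tO$ — and that the decision oracle $\tDO(q,R,B[b])$ answers \textsc{No} whenever $B[b]$ has no point within Euclidean distance $R$ of $q$; a spurious \textsc{Yes} allowed by the gray zone in the definition only ever causes one extra call to $\tO$ and can be absorbed into the constant. I would also assume the quantities $\psi_j$ defined below are distinct (generic for real data; otherwise perturb).

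First I would introduce the potential $C:=\cosh(\mathit{dist}_H)-1$, which the algorithm implicitly maintains: it starts at $+\infty$ and is non-increasing. For a bucket $j$ let $m_j$ be the Euclidean distance from $q$ to the closest element of $B[j]$, and set the fixed scalar $\psi_j:=\tfrac{2(1+\eps)^2 w^{j} m_j^2}{1-\|q\|^2}$. Two one-line computations, both already present in the proof of the approximation theorem for \bandedAlgNS, give what I need: \textbf{(i)} the radius $R$ that Algorithm~\ref{alg:nn_banded_rand_alg} sets when it processes bucket $j$ satisfies $m_j\le R \iff \psi_j \le C$, so $\tO$ is invoked on $j$ exactly when $\psi_j\le C$ at that moment (the first processed bucket, where $C=\infty$, is always invoked); and \textbf{(ii)} if $\tO$ is invoked on $j$ it returns some $n_F\in B[j]$ with $\|q-n_F\|\le(1+\eps)m_j$ and $1-\|n_F\|^2\ge w^{-j}$, hence $\cosh(d_H(q,n_F))-1\le\psi_j$, so the value of $C$ immediately afterwards is at most $\psi_j$.

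Now the records argument. Suppose the buckets are processed in the random order $\pi(1),\dots,\pi(N)$, and let $C_{t-1}$ be the value of $C$ just before $\pi(t)$ is processed. By \textbf{(ii)}, right after any invoked bucket the potential is at most that bucket's $\psi$-value, and $C$ never increases, so $C_{t-1}\le\psi_{\pi(s)}$ for every $s<t$ on which $\tO$ was invoked. By \textbf{(i)}, if $\pi(s)$ with $s<t$ was \emph{not} invoked then $\psi_{\pi(s)}>C_{s-1}\ge C_{t-1}$. Hence, if $\tO$ is invoked on $\pi(t)$ then (using \textbf{(i)} again) $\psi_{\pi(t)}\le C_{t-1}\le\psi_{\pi(s)}$ for \emph{all} $s<t$, i.e.\ $\pi(t)$ is a left-to-right minimum of $\psi_{\pi(1)},\dots,\psi_{\pi(N)}$. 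Therefore the number of calls to $\tO$ is at most the number of left-to-right minima of a uniformly random permutation of $N$ values, and since $\Pr[\pi(k)\text{ is a record}]=1/k$ by symmetry, linearity of expectation gives expected count $\le\sum_{k=1}^{N}1/k=H_N=O(\ln N)$.

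The one delicate point — the main obstacle — is item \textbf{(i)}: because $\tDO$ is only guaranteed correct up to a $(1+\eps)$ slack and the norms of points inside a single band vary by a factor $w$, the event ``$\tO$ is invoked on $j$'' is in general only sandwiched between the two thresholds $\{\psi_j\le C\}$ and $\{\psi_j\le w(1+\eps)^2 C\}$ rather than being exactly $\{\psi_j\le C\}$. I would handle this by carrying the $O(1)$ multiplicative gap explicitly through the chain of inequalities above: the invoked buckets then form the ``left-to-right minima up to a fixed multiplicative factor'' of the sequence $\psi_{\pi(t)}$, a set whose expected size is still $O(\ln N)$ since a constant-factor relaxation inflates the number of records by at most that constant. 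Stating the lemma for an exact decision oracle recovers the clean bound $H_N\le 1+\ln N$.
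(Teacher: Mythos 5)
Your core reduction is sound and is essentially the same counting the paper performs, just packaged differently: the paper derives $\sum_{n=1}^{N}\frac1n$ by induction on the number of buckets, while you get it directly as the expected number of left-to-right minima of a random permutation via linearity of expectation over record indicators; items \textbf{(i)} and \textbf{(ii)} correctly translate the algorithm's probing rule and the potential update into statements about the $\psi_j$'s, and under an exact decision rule (probe bucket $j$ only when it is certified to contain a point at least as close as the current best, which is how the paper's prose describes \bandedRandAlgNS) your argument gives the clean $H_N$ bound.

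The genuine gap is your final step handling the $(1+\eps)$ gray zone: the claim that ``a constant-factor relaxation inflates the number of records by at most that constant'' is false. Define an $\alpha$-approximate record as a position $t$ with $\psi_{\pi(t)}\le\alpha\min_{s<t}\psi_{\pi(s)}$; if all the $\psi_j$ are distinct but lie within a multiplicative factor $\alpha=(1+\eps)^2$ of one another (say $\psi_j=1+j/N^2$), then \emph{every} position is an $\alpha$-approximate record, so their expected number is $N$, not $O_\alpha(\ln N)$ --- distinctness does not save you, and there is no constant-factor relationship between exact and approximate record counts. Indeed, in that clustered regime an adversarial $\tDO$ that answers YES throughout its slack region can force $\Theta(N)$ invocations of $\tO$ regardless of the random order, so the lemma simply does not hold under the relaxed probing rule you tried to accommodate; the statement requires the idealized interpretation (a probe occurs only when the bucket provably contains a point no farther than the current best), which is what the paper implicitly assumes and what your exact-records argument already covers. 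The fix is therefore not to push the multiplicative slack through the record count, but to state (as the paper does) that $\tO$ is invoked only on certified buckets, after which your $\Pr[\pi(k)\text{ is a record}]=1/k$ computation yields $H_N\le 1+\ln N$ exactly as in the paper's inductive proof.
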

\begin{proof}
We proceed with a proof by induction. The base case when $N=1$ holds. Now suppose that for $k= 2, 3, \ldots N-1$ buckets, the expected number of invocations is $\sum\limits_{n=1}^{k} \frac{1}{n}$.
Now let us consider the case when we have $N$ buckets. First of all, we always probe the first bucket. Now suppose the hyperbolic nearest neighbor to $q$ in the first bucket is the $k$-th hyperbolic nearest neighbor to $q$ among the entire dataset. Then we subsequently have to probe at most $k-1$ buckets, so the problem has been reduced to the subproblem of solving for the number of expected probes where the total number of buckets is $k-1$, which by our inductive assumption is $\sum\limits_{j=1}^{k-1} \frac{1}{j}$. This event happens with probability $\frac{1}{N}$. Now, summing over all possible values of $k$ gives us the following expression: 
\[
1 + \frac{1}{N} \sum\limits^{N-1}_{k=1} \sum\limits^k_{j=1} \frac{1}{j} ~.
\]

Also note that by this reasoning combined with the inductive hypothesis gives that 
\[
1 + \frac{1}{N-1}\sum\limits^{N-2}_{k=1} \sum\limits^k_{j=1} \frac{1}{j} = \sum\limits_{n=1}^{N-1} \frac{1}{n} ~.
\]
Now to evaluate: 
\begin{align*}
1 + \frac{1}{N} \sum\limits^{N-1}_{k=1} \sum\limits^k_{j=1} \frac{1}{j} &= 1 + \frac{N-1}{N} \cdot \frac{1}{N-1}\sum\limits^{N-2}_{k=1} \sum\limits^k_{j=1} \frac{1}{j} + \frac{1}{N} \sum\limits^{N-1}_{j=1} \frac{1}{j} \\
&=1 + \frac{N-1}{N} \cdot \sum\limits_{n=2}^{N-1} \frac{1}{n} + \frac{1}{N} \sum\limits^{N-1}_{j=1} \frac{1}{j} \\
&=1 + \sum\limits_{n=2}^{N-1} \frac{1}{n} + \frac{1}{N} \\
&=\sum\limits_{n=1}^N \frac{1}{n}
\end{align*}
\end{proof}

Then we come to the final runtime guarantee of \bandedRandAlgNS. 
\begin{theorem}[Runtime of \bandedRandAlgNS]
The expected runtime of \bandedRandAlg is $O(\mathcal{T} \cdot  \ln N +  \mathcal{T}_\mathcal{D} \cdot N)$, 
where $\mathcal{T}$ is the runtime for one invocation of $\tO$ and $\mathcal{T}_\mathcal{D}$ is the runtime for one invocation of the decision oracle $\tDO$ and $N$ is the total number of buckets. 
\end{theorem}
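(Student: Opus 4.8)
The plan is to decompose the running time of \bandedRandAlg into three pieces: (i) the per-bucket bookkeeping (computing $R$, evaluating $d_H(q, n_F)$, and updating $n_H$ and $dist_H$), (ii) the invocations of the decision oracle $\tDO$, and (iii) the full probes $\tO(q, B[b])$. Piece (i) costs $O(1)$ per bucket, hence $O(N)$ in total, which is absorbed into the $\mathcal{T}_\mathcal{D} \cdot N$ term, so it remains to bound (ii) and (iii) and then sum by linearity of expectation.

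For (ii), I would observe that the FOR loop visits each of the $N$ buckets exactly once and issues at most one call to $\tDO$ per iteration (and none on the first probed bucket, where $dist_H = \infty$). Hence the decision oracle contributes at most $\mathcal{T}_\mathcal{D} \cdot N$ to the running time, and this bound is deterministic, not merely in expectation.

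For (iii), the key point is that a full probe is executed only on the first probed bucket or when $\tDO$ answers YES, and — by the radius choice $R = \sqrt{(\cosh(dist_H)-1)(1-\|q\|^2) / (2 w^i (1+\eps)^2)}$ from the approximation analysis — a YES answer certifies that the bucket genuinely contains a point strictly closer to $q$ in hyperbolic distance than the current best. Since $dist_H$ is nonincreasing over the run, a bucket that fails to trigger a YES stays ruled out. Therefore the number of full probes is exactly the random variable analyzed in the preceding Lemma, and I would invoke that Lemma directly to conclude that $\mathbb{E}[\#\text{ full probes}] = \sum_{n=1}^{N} 1/n = O(\ln N)$, so the expected cost of (iii) is $O(\mathcal{T} \cdot \ln N)$.

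Summing the three contributions gives an expected running time $O(\mathcal{T} \cdot \ln N + \mathcal{T}_\mathcal{D} \cdot N + N) = O(\mathcal{T} \cdot \ln N + \mathcal{T}_\mathcal{D} \cdot N)$, as claimed. The main obstacle is making precise that the full-probe count coincides with the quantity bounded in the Lemma: one must check that (a) a YES from $\tDO$ implies a genuinely nearer neighbor in that bucket, so there are no spurious full probes beyond the initial one, and (b) shrinking $dist_H$ only removes buckets from future consideration, so that under a uniformly random probing order the Lemma's conditioning-on-rank / harmonic-sum recursion applies verbatim. Once that correspondence is established, the remainder is just linearity of expectation and collecting terms.
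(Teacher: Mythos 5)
Your proof is correct and takes essentially the same route as the paper, which states this theorem as an immediate consequence of the preceding lemma: at most one call to $\tDO$ per bucket in the single pass over the $N$ buckets gives the $\mathcal{T}_\mathcal{D}\cdot N$ term, and the lemma's $O(\ln N)$ bound on the expected number of full probes by $\tO$ gives the $\mathcal{T}\cdot\ln N$ term. Your checks that a YES answer (with the chosen radius $R$) certifies a bucket element no farther than the current best, and that the full-probe count is exactly the quantity bounded by the lemma, simply make explicit what the paper leaves implicit.
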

\subsection{\bandedAlg cannot return an exact hyperbolic nearest neighbor with an exact Euclidean oracle $\O$}
We provide a simple example demonstrating that even with an exact Euclidean oracle $\O$, \bandedAlg is not guaranteed to return an exact hyperbolic nearest neighbor. However, if the dataset has already been divided into buckets according to \bandingDatasetAlgNS, one can additionally leverage the recentering idea that forms the core of \recenteringAlg to return an exact hyperbolic nearest neighbor. We leave the implementation details to the reader. 

The example is as follows. Suppose we have a dataset of two points, $n^* = (0, 0.5)$ and $n_E = (0.15, 0.55)$ and the query $q$ is $(0,0.99)$. Straightforward calculation shows that:
\[
\frac{1}{1-\|n^*\|^2} \approx 1.33
\]
and 
\[
\frac{1}{1-\|n_E\|^2} \approx 1.48 ~. 
\]
Therefore for $w \geq 1.5$, \bandingDatasetAlg will designate them into the same bucket. 

We also remark that the hyperbolic nearest neighbor is $n^*$, since $d_H(q, n^*) \approx 4.19$ and $d_H(q, n_E) \approx 4.384$. 

However, the Euclidean nearest neighbor of $q$ is $n_E$, not $n^*$, with $\|q - n_E\| \approx .464 $ and $\|q - n^*\| = .49$. Therefore, \bandedAlg will return $n_E$ when using an exact Euclidean nearest neighbor oracle, which is an approximate nearest neighbor to the query.  

This example relies crucially on the fact that depending on the placement of the hyperbolic nearest neighbor $n^*$ on the Poincare disk, the hyperbolic ball around the query $q$ with radius $d_H(q, n^*)$, call it $B_H(q, d_H(q, n^*))$, can be completely contained in $B_E(q, d_E(q, n^*))$, the Euclidean ball around $q$ with radius $d_E(q, n^*)$. When this is true, for any predetermined value of $c$, one can find a set of $q, n^*, n_E$ where \bandedAlg cannot guarantee an exact nearest neighbor even when using an exact Euclidean nearest neighbor oracle. 
\bibliographystyle{alpha}
\bibliography{hyperbolic}

\end{document}